\begin{document}
	
\title{OPP-Miner: Order-preserving sequential pattern mining}

\author{Youxi Wu \IEEEmembership{Member,~IEEE}, Qian Hu, Yan Li, Lei Guo, Xingquan Zhu, \IEEEmembership{Senior Member,~IEEE,} and Xindong Wu \IEEEmembership{Fellow,~IEEE}
	
\thanks{*Corresponding author: Y. Li}
\thanks{Y. Wu is with the School of Artificial Intelligence, Hebei University of Technology, Tianjin 300401, China, and also with the Hebei Key Laboratory of Big Data Computing, Tianjin 300401, China.}
\thanks{Q. Hu is with the School of Artificial Intelligence, Hebei University of Technology, Tianjin 300401, China.}
\thanks{Y. Li is with the School of Economics and Management, Hebei University of Technology, Tianjin 300401, China (e-mail:lywuc@163.com).}
\thanks{L. Guo is with the State Key Laboratory of Reliability and Intelligence of Electrical Equipment, Hebei University of Technology, Tianjin 300401, China.}
\thanks{X. Zhu is with the Department of Computer, Electrical Engineering and Computer Science, Florida Atlantic University, FL 33431, USA.}
\thanks{X. Wu is with the Mininglamp Academy of Sciences, Mininglamp Technology, Beijing 100084, China, and also with the Key Laboratory of Knowledge Engineering with Big Data (Hefei University of Technology), Ministry of Education, Hefei 230009, China.}}

\maketitle

\begin{abstract}
A time series is a collection of measurements in chronological order. Discovering patterns from time series is useful in many domains, such as stock analysis, disease detection, and weather forecast. To discover patterns, existing methods often convert time series data into another form, such as nominal/symbolic format, to reduce dimensionality, which inevitably deviates the data values. Moreover, existing methods mainly neglect the order relationships between time series values. To tackle these issues, inspired by order-preserving matching, this paper proposes an Order-Preserving sequential Pattern (OPP) mining method, which represents patterns based on the order relationships of the time series data. An inherent advantage of such representation is that the trend of a time series can be represented by the relative order of the values underneath the time series data. To obtain frequent trends in time series, we propose the OPP-Miner algorithm to mine patterns with the same trend (sub-sequences with the same relative order). OPP-Miner employs the filtration and verification strategies to calculate the support and uses pattern fusion strategy to generate candidate patterns. To compress the result set, we also study finding the maximal OPPs. Experiments validate that OPP-Miner is not only efficient and scalable but can also discover similar sub-sequences in time series. In addition, case studies show that our algorithms have high utility in analyzing the COVID-19 epidemic by identifying critical trends and improve the clustering performance. The algorithms and data can be downloaded from \url{https://github.com/wuc567/Pattern-Mining/tree/master/OPP-Miner}.
\end{abstract}

\begin{IEEEkeywords}
Sequential pattern mining; time series; order-preserving; relative order; COVID-19.
\end{IEEEkeywords}

\IEEEpeerreviewmaketitle

\newtheorem{theorem}{Theorem}
\newtheorem{definition}{Definition}
\newtheorem{example}{Example}
\newenvironment{proof}{\textit{Proof:}}

\section{Introduction}
\IEEEPARstart
	{S}{equential} pattern mining~\cite{Fournier-Viger2017a}, an important branch of data mining, aims to find interesting sub-sequences (also known as patterns) from a given sequence. Analysing these potentially useful patterns is useful for decision making in many domains, such as feature extraction for sequence classification \cite {wu2021tcyb}, text analysis \cite{qiang2020tkde}, disease detection~\cite{Ghosh2017Septic}, event log analysis~\cite{ fournier2020event}, and network clickstream analysis~\cite{Nishimuraa2018latent}. Various methods have been proposed to mine sequential patterns. For example, negative sequential pattern mining~\cite{Dong2020e-RNSP, Dong2019top-k} was proposed to detect fraud patterns. High utility pattern mining~\cite{Truong2020EHAUSM, Gan2020HUOPM, fournier2019, song2021kais} was proposed to avoid mining frequent but unimportant patterns. Gap constraint sequential pattern mining~\cite{Wu2014Mining, Wu2018NOSEP,Truong2019Efficient} was also proposed to make mining patterns more flexible. All these methods have been applied to find valuable information.
	
	Although existing methods have significantly advanced the field, there are still many limitations in processing time series data. More specifically, a major challenge of time series data stems from its high dimensionality and continuity characteristics, which makes it difficult to apply traditional sequential pattern mining methods directly to time series analysis. To tackle this issue, traditional time series representation methods often transform numerical data to another form for dimensionality reduction, such as the segmented representation~\cite{Keogh2001online} and symbolic representation~\cite{Lin2007Experiencing} methods. Such methods not only depend on parameter settings where a variance in parameters may lead to dramatic difference in the results \cite {li2021apind,wu2022ida}, they will also lose some important information during the process and partially break the continuity of the time series. To discover useful information underneath the time series data, it is critical to explore appropriate time series representation and mining methods.
		
	Because time series data consists of numeric values, users are often interested in finding trends in a time series instead of discovering specific character patterns \cite {wu2021tmis,wu2021insweak}. For example, the trend of stock price in several consecutive days is important for investors. Therefore, it is necessary to find a new pattern that can represent the trend of a time series. Recently, order-preserving pattern matching was proposed~\cite{Kim2014Order} to find all sub-sequences with the same relative order in a time series as the given pattern. Although it can locate sub-sequences with the same trend as the given pattern, this approach is rather limited because it cannot find frequent patterns, and also cannot find novel patterns previously unknown. Inspired by order-preserving matching~\cite{Kim2014Order}, we propose order-preserving sequential pattern (OPP) mining, which can find the frequent sub-sequences with the same relative order in a time series. An illustrative example is as follows.
	
\begin{example}
\label{example1}
	Given a time series of product sales for 16 days,  \textbf{s} $ = $ ($\textit{s}_{1} $, $\textit{s}_{2} $, $ \textit{s}_{3} $, $ \textit{s}_{4} $, $ \textit{s}_{5} $, $ \textit{s}_{6} $, $ \textit{s}_{7} $, $ \textit{s}_{8} $, $ \textit{s}_{9} $, $ \textit{s}_{10} $, $ \textit{s}_{11} $, $ \textit{s}_{12} $, $ \textit{s}_{13} $, $ \textit{s}_{14} $, $ \textit{s}_{15} $, $ \textit{s}_{16} $) $ = $ (11, 10, 21, 25, 12, 14, 18, 19, 26, 13, 16, 20, 24, 30, 15, 17) shown in Fig.~\ref{FIG:1,intro}. We aim to mine frequent OPPs with $ \textit{minsup} = $ 3.  
\end{example}

\begin{figure}[ht]
	\centering
	\includegraphics[width=0.45\textwidth]{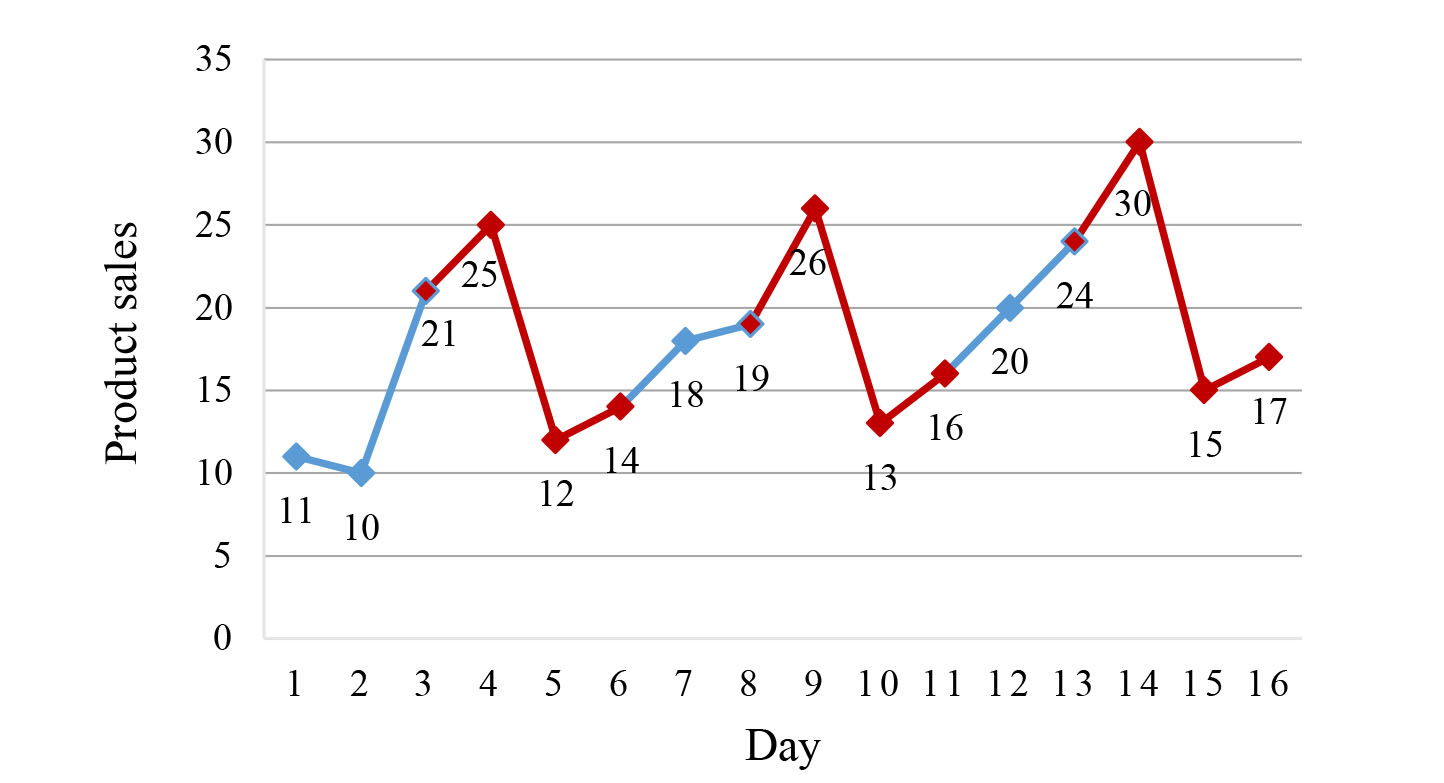}
	\label{FIG:1,intro}
	\caption{For sub-sequence ($ \textit{s}_{3} $, $ \textit{s}_{4} $, $ \textit{s}_{5} $, $ \textit{s}_{6} $) $ = $ (21, 25, 12, 14), 12 is the lowest in this sub-sequence, 14 is the second lowest, 21 is the third lowest number and 25 is the highest one. Therefore, the relative order of sub-sequence ($ \textit{s}_{3} $, $ \textit{s}_{4} $, $ \textit{s}_{5} $, $ \textit{s}_{6} $) is (3, 4, 1, 2). Similarly, the relative orders of sub-sequences ($ \textit{s}_{8} $, $ \textit{s}_{9} $, $ \textit{s}_{10} $, $ \textit{s}_{11} $) and ($ \textit{s}_{13} $, $ \textit{s}_{14} $, $ \textit{s}_{15} $, $ \textit{s}_{16} $) are also (3, 4, 1, 2). (3, 4, 1, 2) is a frequent OPP since there are three sub-sequences, colored in red, with the same relative order (3, 4, 1, 2).}
\end{figure}

	From Fig.~\ref{FIG:1,intro}, (3, 4, 1, 2) is a frequent OPP with length 4 and it occurs three times in the time series $\textbf{s}$. The pattern letters (3, 4, 1, 2) specify their relative orders where 1, 2, 3, and 4 define increasing value, so (3, 4, 1, 2) means that the pattern shifts from the second highest (3) to the highest (4) then to the lowest (1), followed by the second highest (2).
	
	The goal of OPP mining is to find all frequent OPPs from the given time series. In Example 1, there are seven frequent OPPs in time series \textbf{s}: (1, 2), (2, 1), (1, 2, 3), (2, 3, 1), (3, 1, 2), (1, 2, 3, 4), and (3, 4, 1, 2). Obviously, this result set contains many redundant short patterns. For example, for pattern (1, 2), one of its order-preserving super-patterns is (2, 3, 1) which is also frequent and contains the trends that pattern (1, 2) represents. Therefore, to further compress the result set, we also study finding the maximal OPPs whose order-preserving super-patterns are all infrequent. In this way, seven frequent OPPs can be compressed into two maximal OPPs: (1, 2, 3, 4), and (3, 4, 1, 2).

	The main contributions of this paper are as follows. 

\begin{enumerate} 
	\item We study a novel frequent order-preserving pattern discovery problem in time series, and propose an algorithm named OPP-Miner, which consists of two essential tasks, support calculation and candidate pattern generation, for OPP mining.
	
	\item To calculate the support effectively, OPP-Miner adopts filtration and verification strategies. To generate candidate patterns, OPP-Miner employs a pattern fusion strategy that effectively reduces the number of candidate patterns. 
	
	\item To reduce redundant patterns, this paper also proposes the maximal OPP mining method to obtain compression of frequent patterns.
	
	\item We conduct experiments on numerous real-life time series. The results validate that OPP-Miner has good run time efficiency and scalability in handling increasing volumes of time series data. More importantly, experiments show that our algorithms have high utility in analyzing COVID-19 epidemic by identifying critical trends and improve the clustering performance.
\end{enumerate}

	The remainder of the paper is organized as follows. Related work is summarized in Section~\ref{section:Related}, followed by the problem definition in Section~\ref{section:definition}. Section~\ref{section:Algorithm} proposes the OPP-Miner algorithm for mining all frequent OPPs and MOPP-Miner
	for mining maximal ones. Section~\ref{section:Experimental} reports experiments and comparisons, including a case study on finding patterns from COVID-19 infection time series. Section~\ref{section:Conclusion} concludes the paper.

\section{Related work}
\label{section:Related}

	In this section, we review some related work concerning sequential pattern mining, time series representation, and order-preserving pattern matching.
	
\subsection{Sequential pattern mining}
	The traditional sequential pattern mining method refers to mining all frequent patterns. However, the limitations of this method have been gradually revealed since real-life is complicated. Hence, various extension methods have been proposed to meet the different needs of users. For example, various methods, such as closed pattern mining~\cite{Wu2020NetNCSP}, maximal pattern mining~\cite{Yun2014maximal, li2021apinmax}, top-\textit{k} pattern mining~\cite{Huang2019Mining}, and top-rank-\textit{k} pattern mining~\cite{Dam2016efficient}, have been proposed to avoid mining many useless patterns. In addition, traditional sequential pattern mining methods mainly focus on mining frequent patterns. However, this assumption cannot be applied to all situations. Thus, many types of patterns and mining methods have been proposed, such as tri-pattern mining~\cite{Min2020Frequent, wu2022tkdd}, high utility pattern mining~\cite{gan2021tcyb}, contrast pattern mining~\cite{Wang2016Efficient}, rare pattern mining~\cite{Piri2018Development} and co-location pattern mining~\cite{Wang2018co}. Furthermore, sequential pattern mining refers to discovering information from sequential data. Data mining usually uses two types of data, the symbolic sequence and the time series. The symbolic sequence consists of characters, such as DNA and gene sequences. Mining interesting patterns in a symbolic sequence has numerous applications, such as uncertain databases~\cite{Lin2020Utilitye} and customer demand analysis~\cite{Nishimuraa2018latent}. The time series consists of continuous numerical values, such as the daily weather temperature~\cite{Xu2015short}, ECG/EEG data~\cite{Samiee2015Epileptic} and the daily share price~\cite{Li2020multimodal}. Finding critical trends and periodic patterns in time series can be applied to many fields, such as a garden route planning~\cite{Tsai2015location}  and recommend products with higher profits \cite {wugeng2021kbs, wulei2021eswa}.


\subsection{Time series representation}
	Frequent pattern discovery from time series is of great significance and challenge. However, most pattern mining methods find important patterns in symbol sequences, while a time series is usually a series of discrete-time data. To mine patterns in a time series, segmented representation and symbolic representation methods are adopted to transform discrete-time data into other forms. Segmented representation employs the segmented means~\cite{Guo2010improved} or segmented sum of variation (SSV)~\cite{Lee2003Dimensionality} to represent the time series data. For example, Lin \textit{et al}.~\cite{Lin2002Finding} proposed the piecewise aggregate approximation (PAA) method to reduce the dimension of a time series and then mined frequent patterns. Symbolic representation is another universal method since after adopting this method, many pattern mining methods can be applied to a time series. For example, Keogh \textit{et al}.~\cite{Keogh2005HOT} proposed the symbolic aggregate approximation (SAX) method to convert numeric data to a symbolic form and then detected abnormal patterns in the converted sequence. Tan \textit{et al}.~\cite{Tan2016Discovering} adopted fluctuation between contiguous data to transform a time series into a symbol string and then mined the patterns with weak-wildcard gaps. 
	Although these methods can realize the mining of time series, discrete-time data is transformed into other forms, which makes the original data deviate. How to mine frequent patterns in the original time series is a critical issue. 

\subsection{Order-preserving pattern matching}
	Recently, order-preserving pattern matching~\cite{Kim2014Order} has been proposed, which represents the trend in a time series effectively since it employs the order relation to represent the rank of each number in the time series. For example, given a time series \textbf{s} $ = $ (39, 46, 10, 21), the rank of the value from low to high are 1 (10), 2 (21), 3 (39), and 4 (46). Therefore, the time series represented in the ranked order space is (3, 4, 1, 2).

	Initially, the order relation between two numbers was represented in a binary relationship ($ < $ , $ > $)~\cite{Kim2014Order}, meaning that characters with numerical equality are not considered. Noticing this disadvantage, Cho et al.~\cite{Cho2013Fast} extended the binary relation to a ternary relationship ($ > $ , $ < $ , $ = $) and designed a new algorithm to determine whether two strings are order-isomorphic even if some elements are equal. These studies belong to exact order-preserving pattern matching, which requires the relative order of pattern and sub-sequence to be exactly the same. For diverse purposes in time series, many effective methods have been proposed, such as the order-preserving suffix tree~\cite{Crochemore2013Order}, binary conversion filtration~\cite{Chhabra2016filtration}, and SIMD instruction filtration~\cite{Chhabra2016Engineering}. In addition, approximate order-preserving pattern matching variants were proposed to allow data noise. One approximation case is that two strings are matched if they have the same relative order after removing up to \textit{k} elements in the same positions in both strings~\cite{Pawel2016Order}. However, this method cannot measure the local approximation between two strings, which leads to the deviation of matching results. To handle this problem, a similarity measure method based on ($ \delta $ , $ \gamma $) distance~\cite{Juan2018New,Wu2020NetDAP} was proposed, using local and global constraints to improve the matching accuracy.

	Table~\ref{tab1:Related} gives a comparison of the related work.
	
	\begin{table*}
		\centering
		\caption{A summary and comparison of related work on mining frequent patterns from time series.}
		\label{tab1:Related}
		\begin{tabular}{ccccccc}
			\toprule   
			Research&	Matching / Mining&	Support&	Representation type&	Binary / Ternary relation\\
			
			\midrule 
			Lin et al.~\cite{Lin2002Finding}&	Mining	&Exact&	Segmented representation&	-\\
			Keogh et al.~\cite{Keogh2005HOT}&	Mining&	Exact&	Symbolic representation	&-\\
			Tan et al.~\cite{Tan2016Discovering}&	Mining&	Exact&	Symbolic representation&	-\\
			Kim et al.~\cite{Kim2014Order}&	Matching&	Exact&	Order relation representation&	Binary\\
			Cho et al.~\cite{Cho2013Fast}	&Matching&	Exact&	Order relation representation	&Ternary\\
			Paweł et al.~\cite{Pawel2016Order}&	Matching&	Approximate&	Order relation representation&	Ternary\\
			Juan et al.~\cite{Juan2018New}&Matching&	Approximate	&Order relation representation	&Ternary\\
			This paper	&Mining	&Exact&	Order relation representation&	Binary\\
			
			\bottomrule
		\end{tabular}
	\end{table*}

	Inspired by order-preserving pattern matching~\cite{Kim2014Order}, we investigate mining all frequent OPPs and maximal OPPs in a given time series that employs the order-preserving pattern matching to calculate the support of patterns. This research is of more practical significance since people can quickly capture the critical trends in a time series by analysing the discovered frequent OPPs.

\section{Problem definition}
\label{section:definition}

	This section defines the basic concepts and formally introduces the problem of OPP mining.

\begin{definition}
	A time series is composed of ordered continuous values, which can be expressed as \textbf{s} $ = $ ($ \textit{s}_{1} $, \dots, $ \textit{s}_{{j}}$  , \dots,  $\textit{s}_{{n}} $), where \textit{n} is the number of measurements, and $\textit{s}_{{j}}$ (1 $\leq$ $ \textit{j} $ $\leq$ $ \textit{n} $)  is the value measured at time \textit{j}.
\end{definition}

\begin{definition}
	Given time series \textbf{s} $ = $ ($ \textit{s}_{1} $, \dots,  $ \textit{s}_{{j}} $ , \dots,  $ \textit{s}_{{n}} $), the rank of $ \textit{s}_{{j}} $ is the number of $ \textit{s}_{{x}} $ whose value is less than $ \textit{s}_{{j}} $ for all 1 $\leq$ $ \textit{x} $ $\leq$ $ \textit{n} $, denoted as  $ \textit{rank}_{{s}} $($\textit{s}_{{j}} $), is defined as:  $ \textit{rank}_{{s}} $($ \textit{s}_{{j}} $) $ = $ 1 + $ \vert $$ \lbrace $$\textit{x} $  $ \vert $ $ \textit{s}_{{x}} $ $ < $ $ \textit{s}_{{j}} $, 1 $\leq$ $ \textit{x} $ $\leq$ $ \textit{n} $$ \rbrace $$ \vert $.
\end{definition}

\begin{definition}
	Given time series \textbf{s}, the relative order of \textbf{s}, denoted as $ \sigma $(\textbf{s}), is defined as: $ \sigma $(\textbf{s}) $ = $ ($ \textit{rank}_{{s}} $($ \textit{s}_{1} $), $ \textit{rank}_{{s}} $($ \textit{s}_{2} $) , \dots , $ \textit{rank}_{{s}} $($ \textit{s}_{{n}} $)).
\end{definition}

\begin{definition}
	The pattern represented by the relative order of the sequence is called the OPP.
\end{definition}

\begin{definition}
	Let \textbf{s} $ = $ ($ \textit{s}_{1} $, $ \textit{s}_{2} $, \dots , $ \textit{s}_{{n}} $) be a time series and \textbf{p} $ = $ ($ \textit{p}_{1} $, $ \textit{p}_{2} $, \dots , $ \textit{p}_{{m}} $) be an OPP. \textit{L} $ = $ $ < $$ \textit{l}_{1} $, $ \textit{l}_{2} $, \dots , $ \textit{l}_{{m}} $$ > $ is an occurrence of pattern \textbf{p} in \textbf{s}, if and only if $ \sigma $($ \textit{s}_{{l}_{1}} $, $ \textit{s}_{{l}_{2}} $, \dots, $ \textit{s}_{{l}_{{m}}} $) $ = $ ($ \textit{p}_{1} $, $ \textit{p}_{2} $, \dots , $ \textit{p}_{{m}} $). The number of occurrences of \textbf{p} in \textbf{s} is the support, represented by \textit{sup}(\textbf{p}, \textbf{s}).
\end{definition}

\begin{definition}
	If \textit{sup}(\textbf{p}, \textbf{s}) is no less than the minimum support threshold \textit{minsup}, pattern \textbf{p} is called a frequent pattern.
\end{definition}

	\textbf{Problem Statement}. Given time series \textbf{s} and minimum support threshold \textit{minsup}, the problem of OPP mining is to discover all frequent OPPs.

\begin{example}
	In Example \ref{example1}, for sub-sequence $ \textbf{t}_{1} $ $ = $ ($ \textit{s}_{3} $, $ \textit{s}_{4} $, $ \textit{s}_{5} $, $ \textit{s}_{6} $) $ = $ (21, 25, 12, 14), the rank of each number in $ \textbf{t}_{1} $ is 3, 4, 1, and 2, respectively. Hence,  $ \sigma $($ \textbf{t}_{1} $) $ = $ (3, 4, 1, 2). We refer to pattern $ \textbf{p} $ $ = $ (3, 4, 1, 2) as an OPP. The support of \textbf{p} in \textbf{s} is 3, since $ \sigma $($ \textit{s}_{3} $, $ \textit{s}_{4} $, $ \textit{s}_{5} $, $ \textit{s}_{6} $) $ = $ $ \sigma $($ \textit{s}_{8} $, $ \textit{s}_{9} $, $ \textit{s}_{10} $, $ \textit{s}_{11} $) $ = $ $ \sigma $($ \textit{s}_{13} $, $ \textit{s}_{14} $, $ \textit{s}_{15} $, $ \textit{s}_{16} $) $ = $ ($ \textit{p}_{1} $, $ \textit{p}_{2} $, $ \textit{p}_{3} $, $ \textit{p}_{4} $) $ = $ (3, 4, 1, 2), i.e., $ < $3, 4, 5, 6$ > $, $ < $8, 9, 10, 11$ > $, and $ < $13, 14, 15, 16$ > $ are three occurrences of \textbf{p} in \textbf{s}. When \textit{minsup} $ = $ 3, there are seven frequent OPPs in time series \textbf{s}: (1, 2), (2, 1), (1, 2, 3), (2, 3, 1), (3, 1, 2), (1, 2, 3, 4) and (3, 4, 1, 2). 
\end{example}

\section{Algorithm design}
\label{section:Algorithm}
	Mining OPPs requires two essential tasks: (1) pattern support calculation; and (2) candidate pattern generation. The former will check the support (frequency) of the pattern with respect to the underlying time series databases, and the latter will generate a set of candidate patterns.	
	
	In this section, we first propose two strategies, the filtration and verification strategies, to calculate the pattern support in Subsection \ref{subsection:Support}, followed by  Subsection~\ref{subsection:Candidate} which studies the pattern fusion strategy to generate candidate patterns. Subsection~\ref{subsection:OPP-Miner} introduces the OPP-Miner algorithm  for mining all frequent patterns and Subsection~\ref{subsection:MOPP-Miner} studies mining maximal ones.
	
\subsection{Support calculation}
\label{subsection:Support}
	To determine whether a candidate pattern is frequent, it is necessary to calculate its support in sequence~\cite{Wu2020NetNCSP,Wu2017Strict}. In this subsection, we will introduce the support calculation algorithm FVP, which adopts filtration and verification strategies to effectively filter redundant sub-sequences and finds the occurrences satisfying the order relations. The specific methods are elaborated in the following.

	Given time series \textbf{s} $ = $ ($ \textit{s}_{1} $, $ \textit{s}_{2} $, \dots , $ \textit{s}_{{n}} $) and pattern \textbf{p} $ = $ ($ \textit{p}_{1} $, $ \textit{p}_{2} $, \dots , $ \textit{p}_{{m}} $), the process of support calculation is essential to find the sub-sequences with the same relative orders as the pattern. There are five steps to calculate the support.

	Step 1. Formalize the order relations of pattern \textbf{p}. To address this issue, the numbers in pattern \textbf{p} are sorted in ascending order, and the position of the number with rank \textit{i} in pattern \textbf{p} is denoted as $ \textit{index}[\textit{i}] $. The results are stored in an ordered table, where $ \textit{p}_{{index}[{i}]} $ $ < $ $ \textit{p}_{{index}[{i}+1]} $ (1 $\leq$ $ \textit{i} $ $\leq$ $ \textit{m} $). For example, suppose pattern \textbf{p} $ = $ (3, 4, 5, 1, 2). The ordered table of pattern \textbf{p} is shown in Table~\ref{tab2:ordertable}.

\begin{table}[h]
	\centering
	\caption{The ordered table of pattern (3, 4, 5, 1, 2)}
	\label{tab2:ordertable}
	\begin{tabular}{ccccccc}
	\toprule   
		Sorted pattern & 1 & 2 & 3 & 4 & 5\\
	\midrule 
		\textit{Index}[\textit{i}] & 4 & 5 & 1 & 2 & 3\\
	\bottomrule
	\end{tabular}
\end{table}

	Step 2. \textbf{\textit{Filtration Strategy}}. The consecutive numbers in pattern \textbf{p} and time series \textbf{s} are compared pairwise and the results are encoded as \textbf{p$ ’ $} $ = $ ($ \textit{a}_{1} $, \dots, $ \textit{a}_{{i}} $, \dots, $ \textit{a}_{{m}-1} $) and \textbf{s$ ’ $} $ = $ ($ {\textit{b}}_{1} $, \dots , ${\textit{b}}_{j} $, \dots , $ \textit{b}_{{n}-1} $) using binary numbers according to the following equations:
	
	\begin{equation}
	a_{i} = 
	\begin{cases}
	1,& p_{i} < p_{i+1}   (1\leq i \leq m-1)\\\\
	0, & p_{i} > p_{i+1}(1\leq i \leq m-1)
	\end{cases}
	\end{equation}
	
	\begin{equation}
	b_{i} = 
	\begin{cases}
	1,& s_{j} < s_{j+1}   (1\leq j \leq n-1)\\\\
	0, & s_{j} > s_{j+1}(1\leq j \leq n-1)
	\end{cases}
	\end{equation}

	Step 3. \textbf{\textit{Filtration Strategy}}. $ \rm{SBNDM}_{2} $~\cite {Durian2010Improving} is employed to find the same sub-sequence as \textbf{p$ ’ $} in \textbf{s$ ’ $}.

	Step 4. \textbf{\textit{Verification Strategy}}. According to Step 2, a sub-sequence in \textbf{s$ ’ $} is ($ \textit{b}_{l_{1}} $, $ \textit{b}_{l_{2}} $, \dots , $ \textit{b}_{{l}_{{m}-1}} $) whose corresponding sub-sequence in \textbf{s} is ($ \textit{s}_{l_{1}} $, $ \textit{s}_{l_{2}} $, \dots , $ \textit{s}_{{l}_{{m}-1}} $, $ \textit{s}_{{l}_{{m}}} $). We will verify whether ($ \textit{s}_{l_{1}} $, $ \textit{s}_{l_{2}} $, \dots , $ \textit{s}_{{l}_{{m}-1}} $, $ \textit{s}_{{l}_{{m}}} $) can be an occurrence. According to the ordered table, \textit{L} $ = $ $ < $$ \textit{l}_{1} $, $ \textit{l}_{2} $, \dots , $ \textit{l}_{{m}} $$ > $ is an occurrence of pattern \textbf{p} in time series \textbf{s}, if and only if $ \textit{s}_{l_{1}-1+index[i]} $ $ < $ $ \textit{s}_{l_{1}-1+index[i+1]} $ for all 1 $\leq$ \textit{i} $\leq$ \textit{m}-1.

	Step 5. Iterate Steps 3 and 4 until all sub-sequences are found and verified.

	An illustrative example is shown as follows.

\begin{example}
	In this example, we select the time series in Example \ref{example1} and pattern \textbf{p} $ = $ (3, 4, 5, 1, 2). 
	
	Step 1. According to pattern \textbf{p}, we create the ordered table of pattern \textbf{p} shown in Table~\ref{tab2:ordertable}. 
	
	Step 2. Pattern \textbf{p} and time series \textbf{s} are converted into binary number strings, where \textbf{p}$ ’ $ $ = $ (1, 1, 0, 1) and \textbf{s}$ ’ $ $ = $ ($ \textit{b}_{1} $, $ \textit{b}_{2} $, $ \textit{b}_{3} $, $ \textit{b}_{4} $, $ \textit{b}_{5} $, $ \textit{b}_{6} $, $ \textit{b}_{7} $, $ \textit{b}_{8} $, $ \textit{b}_{9} $, $ \textit{b}_{10} $, $ \textit{b}_{11} $, $ \textit{b}_{12} $, $ \textit{b}_{13} $, $ \textit{b}_{14} $, $ \textit{b}_{15} $) $ = $ (0, 1, 1, 0, 1, 1, 1, 1, 0, 1, 1, 1, 1, 0, 1), respectively.
	
	Step 3. $ \rm{SBNDM}_{2} $ is employed to find the same sub-sequence as \textbf{p}$ ’ $ in \textbf{s}$ ’ $, and the first sub-sequence found is ($ \textit{b}_{2} $, $ \textit{b}_{3} $, $ \textit{b}_{4} $, $ \textit{b}_{5} $).
	
	Step 4. Sub-sequence ($ \textit{b}_{2} $, $ \textit{b}_{3} $, $ \textit{b}_{4} $, $ \textit{b}_{5} $) in \textbf{s}$ {’} $ corresponds to sub-sequence ($ \textit{s}_{2} $, $ \textit{s}_{3} $, $ \textit{s}_{4} $, $ \textit{s}_{5} $, $ \textit{s}_{6} $) in \textbf{s}. When \textit{i} $ = $ 1, we verify whether $ \textit{s}_{l_{1}-1+index[1]} $ $ < $ $ \textit{s}_{l_{1}-1+index[2]} $ is correct. According to Table~\ref{tab2:ordertable}, we know that \textit{index}[1] and \textit{index}[2] are 4 and 5, respectively. Therefore, we verify whether $ \textit{s}_{2-1+4} $ $ < $ $ \textit{s}_{2-1+5} $ is correct. $ \textit{s}_{2-1+4} $ $ < $ $ \textit{s}_{2-1+5} $ is correct since $ \textit{s}_{5} $ and $ \textit{s}_{6} $ are 12 and 14, respectively. Similarly, when \textit{i} $ = $ 2, we verify whether $ \textit{s}_{l_{1}-1+index[2]} $ $ < $ $ \textit{s}_{l_{1}-1+index[3]} $ is correct. $ \textit{s}_{2-1+5} $ $ < $ $ \textit{s}_{2-1+1} $ is incorrect since $ \textit{s}_{6} $ and $ \textit{s}_{2} $ are 14 and 10, respectively. Thus, ($ \textit{s}_{2} $, $ \textit{s}_{3} $, $ \textit{s}_{4} $, $ \textit{s}_{5} $, $ \textit{s}_{6} $) is not an occurrence of pattern \textbf{p} in time series \textbf{s}.

	Step 5. Iterate Steps 3 and 4, we know that $ < $7, 8, 9, 10, 11$ > $ and $ < $12, 13, 14, 15, 16$ > $ are two occurrences of \textbf{p} in \textbf{s}. Hence, the support of pattern \textbf{p} is 2.
\end{example}

	The FVP algorithm is shown in Algorithm \ref {alg:fvp}.
	
		\begin{algorithm}[h] 
		\caption{FVP: Filtration and Verification Algorithm}  \label{alg:fvp}
		\hspace*{0.02in} {\textbf{Input:} time series \textbf{s}, pattern \textbf{p}}\\	
		\hspace*{0.02in} {\bf Output:}
		\textit{sup}(\textbf{p}, \textbf{s})		
		\begin{algorithmic}[1]
			\State Create the ordered table of \textbf{p};
			\State $ \textbf{p}’ $ $ \leftarrow $  \textit{transform}_\textit{pat}(\textbf{p}) according to Equation (1);
			\State $ \textbf{s}’ $ $ \leftarrow $  \textit{transform}_\textit{pat}(\textbf{s}) according to Equation (2);
			\While{$ \rm{SBNDM}_{2} $($ \textbf{p}’ $, $ \textbf{s}’ $) $ \neq $ null}
			\State \textit{occ} $ \leftarrow $ $ \rm{SBNDM}_{2} $($ \textbf{p}’ $, $ \textbf{s}’ $);
			\If{\textit{occ} is an occurrence according to the verification strategy} \textit{sup}(\textbf{p}, \textbf{s}) ++ ;
			\EndIf
			\EndWhile
			\State return \textit{sup}(\textbf{p}, \textbf{s});
		\end{algorithmic}
	\end{algorithm}

	It is worth noting that we can directly adopt the verification strategy to find the occurrences of \textbf{p} in \textbf{s}. It is very slow since every sub-sequence with equal length of \textbf{p} should be verified. Therefore, to improve the efficiency, we add a filtration phase before the verification phase, to filter redundant sub-sequences, which can greatly reduce the verification times.

\subsection{Candidate pattern generation}
\label{subsection:Candidate}
	Since OPP is represented by relative order, traditional candidate pattern generation methods cannot be directly applied to it. Therefore, to generate candidate OPPs, we first propose an enumeration strategy, and then propose a more efficient pattern fusion strategy.

\subsubsection{Enumeration strategy}
	A frequent OPP \textbf{p} $ = $ ($ \textit{p}_{1} $, \dots, $ \textit{p}_{{j}}$  , \dots,  $\textit{p}_{{m}} $) with length \textit{m}, where 1 $\leq$ \textit{j} $\leq$ \textit{m}, can generate \textit{m} $ + $ 1 kinds of candidate patterns with length \textit{m} $ + $ 1, which are $ \textbf{t}_{\textbf{1}} $, \dots, $ \textbf{t}_{{\textbf{\textit{i}}}}$  , \dots,  $\textbf{t}_{{\textbf{\textit{m}+1}}} $, respectively. To maintain the relative order of pattern \textbf{p}, when generating candidate pattern $ \textbf{t}_{\textbf{\textit{i}}} $ $ = $ ($ t^{1}_{i} $, $ t^{2}_{i} $, \dots , $ t^{m}_{i} $, \textit{i}), if $ \textit{p}_{j} $ $\geq$ \textit{i}, then $ t^{j}_{i} $ $ = $ $ \textit{p}_{j} $ $ + $ 1, otherwise $ t^{j}_{i} $ $ = $ $ \textit{p}_{j} $, for all 1 $\leq$ \textit{i} $\leq$ \textit{m} $ + $ 1.

\begin{example}
\label{example4}
	Suppose there are three frequent OPPs with length 3: (1, 2, 3), (2, 3, 1) and (3, 1, 2). Taking pattern \textbf{p} $ = $ (1, 2, 3) as an example, the enumeration strategy generates four candidate patterns, $ \textbf{t}_{\textbf{1}} $, $ \textbf{t}_{\textbf{2}} $, $ \textbf{t}_{\textbf{3}} $ and $ \textbf{t}_{\textbf{4}} $, respectively. For $ \textbf{t}_{\textbf{1}} $ $ = $ ($ t^{1}_{1} $, $ t^{2}_{1} $, $ t^{3}_{1} $, $ t^{4}_{1} $), first, let $ t^{4}_{1} $ be 1. Then, the relative order of the first three numbers of $\textbf{t}_{\textbf{1}} $ should be the same as pattern \textbf{p}. Since $ \textit{p}_{1} $, $ \textit{p}_{2} $ and $ \textit{p}_{3} $ are all no less than $ t^{4}_{1} $, thus $ t^{1}_{1} $, $ t^{2}_{1} $ and $ t^{3}_{1} $ are 2, 3 and 4, respectively, i.e. $ t^{1}_{1} $ $ = $ $ \textit{p}_{1} $ $ + $ 1 $ = $ 2, $ t^{2}_{1}  = $ $ \textit{p}_{2} $ $ + $ 1 $ = $ 3, $ t^{3}_{1} $ $ = $ $ \textit{p}_{3} $ $ + $ 1 $ = $ 4. Finally, we obtain $ \textbf{t}_{\textbf{1}}  = $ (2, 3, 4, 1). Similarly, we can get $ \textbf{t}_{\textbf{2}} $ $ = $ (1, 3, 4, 2), $\textbf{t}_{\textbf{3}} $ $ = $ (1, 2, 4, 3), and $\textbf{t}_{\textbf{4}} $ $ = $ (1, 2, 3, 4). Other patterns are processed in the same way. The results are shown in Table \ref{tab3:enumeration}. 
\end{example}

\begin{table}[H]
	\centering
	\caption{Generate candidate OPPs adopting the enumeration strategy}
	\label{tab3:enumeration}
	\begin{tabular}{ccccccc}
		\toprule   
		Frequent OPP & Candidate OPPs\\
		\midrule 
		(1, 2, 3) & (2, 3, 4, 1), (1, 3, 4, 2), (1, 2, 4, 3), (1, 2, 3, 4)\\
		(2, 3, 1) & (3, 4, 2, 1), (3, 4, 1, 2), (2, 4, 1, 3), (2, 3, 1, 4)\\
		(3, 1, 2) & (4, 2, 3, 1), (4, 1, 3, 2), (4, 1, 2, 3), (3, 1, 2, 4)\\		
		\bottomrule
	\end{tabular}
\end{table}

	It is worth noticing that under the enumeration strategy, $ \textit{n}! $ candidate patterns with length \textit{n} will be generated in the worst case. Obviously, when \textit{n} increases, the number of candidate patterns will be very large, which will reduce the algorithm efficiency. Therefore, a more efficient method is needed to reduce the number of candidate patterns.

\subsubsection{Pattern fusion strategy}
	To handle the disadvantage of too many candidate patterns generated by the enumeration strategy, in this subsection, we propose a pattern fusion strategy that can effectively reduce the generation of redundant patterns.

\begin{definition}
Given pattern \textbf{p} $ = $ ($ \textit{p}_{1} $, $ \textit{p}_{2} $, \dots , $ \textit{p}_{{m}} $), sub-sequence ($ \textit{p}_{1} $, $ \textit{p}_{2} $, \dots , $ \textit{p}_{{m-1}} $) is called the prefix pattern of \textbf{p}, denoted by \textit{prefix}(\textbf{p}). Sub-sequence \textbf{p} $ = $ ($ \textit{p}_{2} $, \dots , $ \textit{p}_{{m-1}} $, $ \textit{p}_{{m}} $) is called the suffix pattern of \textbf{p}, denoted by \textit{suffix}(\textbf{p}). 
\end{definition}

\begin{definition}
	Given pattern \textbf{p}, the relative order of \textit{prefix}(\textbf{p}) is called the order-preserving prefix pattern, denoted by \textit{prefixorder}(\textbf{p}). The relative order of \textit{suffix}(\textbf{p}) is called the order-preserving suffix pattern, denoted by \textit{suffixorder}(\textbf{p}).
\end{definition}

	\textbf{\textit{Pattern Fusion Strategy}}. For frequent OPPs \textbf{p} $ = $ ($ \textit{p}_{1} $, $ \textit{p}_{2} $, \dots , $ \textit{p}_{{m}} $) and \textbf{q} $ = $ ($ \textit{q}_{1} $, $ \textit{q}_{2} $, \dots , $ \textit{q}_{{m}} $) with length \textit{m}, if they are fused to generate candidate patterns with length \textit{m} $ + $ 1, there are two cases, the general case and the special case.

	(1) General case: if \textit{suffixorder}(\textbf{p}) $ = $ \textit{prefixorder}(\textbf{q}) but \textit{suffix}(\textbf{p}) $ \ne $ \textit{prefix}(\textbf{q}), then \textbf{p} and \textbf{q} can be fused into one candidate pattern \textbf{x} $ = $ ($ \textit{x}_{1} $, $ \textit{x}_{2} $, \dots , $ \textit{x}_{{m+1}} $). The fusion rules are as follows.
	
	If $ \textit{p}_{1} $ $ < $ $ \textit{q}_{m} $, then $ \textit{x}_{1} $ $ = $ $ \textit{p}_{1} $ and $ \textit{x}_{m + 1} $ $ = $ $ \textit{q}_{m} $ $ + $ 1, and if $ \textit{p}_{u} $ $ > $ $ \textit{q}_{m} $, then $ \textit{x}_{u} $ $ = $ $ \textit{p}_{u} $ $ + $ 1, otherwise $ \textit{x}_{u} $ $ = $ $ \textit{p}_{u} $, for all 2 $\leq$ \textit{u} $\leq$ \textit{m}.
	
	If $ \textit{p}_{1} $ $ > $ $ \textit{q}_{m} $, then $ \textit{x}_{1} $ $ = $ $ \textit{p}_{1} $ $ + $ 1 and $ \textit{x}_{m + 1} $ $ = $ $ \textit{q}_{m} $, and if $ \textit{q}_{v} $ $ > $ $ \textit{p}_{1} $, then $ \textit{x}_{v+1} $ $ = $ $ \textit{q}_{v} $ $ + $ 1, otherwise $ \textit{x}_{v+1} $ $ = $ $ \textit{q}_{v} $, for all 1 $\leq$ \textit{v} $\leq$ \textit{m} $ - $ 1.

	(2) Special case: if \textit{suffixorder}(\textbf{p}) $ = $ \textit{prefixorder}(\textbf{q}) and \textit{suffix}(\textbf{p}) $ = $ \textit{prefix}(\textbf{q}), then \textbf{p} and \textbf{q} can be fused into two candidate patterns, \textbf{y} $ = $ ($ \textit{y}_{1} $, $ \textit{y}_{2} $, \dots , $ \textit{y}_{{m+1}} $) and \textbf{k} $ = $ ($ \textit{k}_{1} $, $ \textit{k}_{2} $, \dots , $ \textit{k}_{{m+1}} $). The fusion rules are as follows.
	
	For pattern \textbf{y}, let $ \textit{y}_{1} $ $ = $ $ \textit{p}_{1} $ $ + $ 1 and $ \textit{y}_{{m}+1} $ $ = $ $ \textit{p}_{m} $, and if $ \textit{p}_{u} $ $ > $ $ \textit{p}_{1} $, then $ \textit{y}_{u} $ = $ \textit{p}_{u} $ $ + $ 1, otherwise $ \textit{y}_{u} $ $ = $ $ \textit{p}_{u} $, for all 2 $\leq$ \textit{u} $\leq$ \textit{m}. 
	
	For pattern \textbf{k}, let $ \textit{k}_{1} $ $ = $ $ \textit{p}_{1} $ and $ \textit{k}_{{m}+1} $ $ = $ $ \textit{p}_{m} $ $ + $ 1, and if $ \textit{p}_{u} $ $ > $ $ \textit{p}_{1} $, then $ \textit{k}_{u} $ = $ \textit{p}_{u} $ $ + $ 1, otherwise $ \textit{k}_{u} $ $ = $ $ \textit{p}_{u} $, for all 2 $\leq$ \textit{u} $\leq$ \textit{m}.

	The following example illustrates how to generate candidate patterns based on the above two cases.

\begin{example}
	Given pattern \textbf{p} $ = $ (2, 1, 3, 4), let it fuse with pattern \textbf{q} $ = $ (1, 2, 3, 4) and \textbf{r} $ = $ (1, 3, 4, 2) respectively to generate candidate patterns. 
	
	Firstly, we know that \textit{suffix}(\textbf{p}) = (1, 3, 4), \textit{suffixorder}(\textbf{p}) $ = $ (1, 2, 3), \textit{prefix}(\textbf{q}) $ = $ (1, 2, 3), \textit{prefixorder}(\textbf{q}) $ = $ (1, 2, 3), \textit{prefix}(\textbf{r}) $ = $ (1, 3, 4), and \textit{prefixorder}(\textbf{r}) $ = $ (1, 2, 3). 
	
	1) Since \textit{suffixorder}(\textbf{p}) $ = $ \textit{prefixorder}(\textbf{q}) but \textit{suffix}(\textbf{p}) $ \ne $ \textit{prefix}(\textbf{q}), which satisfies the general case, \textbf{p} and \textbf{q} can be fused into one candidate pattern \textbf{x}. Since $ \textit{p}_{1} $ $ < $ $ \textit{q}_{4} $, $ \textit{x}_{1} $ $ = $ $ \textit{p}_{1} $ $ = $ 2 and $ \textit{x}_{5} $ $ = $ $ \textit{q}_{4} $ $ + $ 1 $ = $ 5. Since $ \textit{p}_{2} $, $ \textit{p}_{3} $, and $ \textit{p}_{4} $ are not greater than $ \textit{q}_{4} $, ($ \textit{x}_{2} $, $ \textit{x}_{3} $, $ \textit{x}_{{4}} $) $ = $ ($ \textit{p}_{2} $, $ \textit{p}_{3} $, $ \textit{p}_{{4}} $) $ = $ (1, 3, 4). Finally, we obtain \textbf{x} $ = $ (2, 1, 3, 4, 5).
	
	2) Since \textit{suffixorder}(\textbf{p}) $ = $ \textit{prefixorder}(\textbf{r}) and \textit{suffix}(\textbf{p}) $ = $ \textit{prefix}(\textbf{r}), which satisfies the special case, \textbf{p} and \textbf{r} can be fused into two candidate patterns: \textbf{t} and \textbf{k}. For pattern \textbf{t}, let $ \textit{t}_{1} $ $ = $ $ \textit{p}_{1} $ $ + $ 1 $ = $ 3, and $ \textit{t}_{5} $ $ = $ $ \textit{p}_{1} $ $ = $ 2. Since $ \textit{p}_{2} $ $ < $ $ \textit{p}_{1} $, $ \textit{t}_{2} $ $ = $ $ \textit{p}_{2} $ $ = $ 1. Since $ \textit{p}_{3} $ and $ \textit{p}_{4} $ are both greater than $ \textit{p}_{1} $, $ \textit{t}_{3} $ $ = $ $ \textit{p}_{3} $ $ + $ 1 $ = $ 4 and $ \textit{t}_{4} $ $ = $ $ \textit{p}_{4} $ $ + $ 1 $ = $ 5. Finally, we get \textbf{t} = (3, 1, 4, 5, 2). For pattern \textbf{k}, let $ \textit{k}_{1} $ $ = $ $ \textit{p}_{1} $ $ = $ 2 and $ \textit{k}_{5} $ $ = $ $ \textit{p}_{1} $ $ + $ 1 $ = $ 3, the remaining steps are the same as the generation of \textbf{t}. Finally, \textbf{k} $ = $ (2, 1, 4, 5, 3) can be obtained.
\end{example}

	The pattern fusion strategy outperforms the enumeration strategy. An illustrative example is shown as follows.

\begin{example}
	In this example, we use the same frequent OPPs with length 3 as in Example~\ref{example4}, and adopt the pattern fusion strategy to generate candidate OPPs. For pattern \textbf{$\textbf{p}$} $ = $ (1, 2, 3), it is easy to obtain \textit{suffix}(\textbf{$\textbf{p}$}) $ = $ (2, 3) and \textit{suffixorder}(\textbf{$\textbf{p}$}) $ = $ (1, 2). Then traverse all frequent OPPs with length 3 to find patterns whose relative order of prefix pattern is also (1, 2), and fuse these patterns with \textbf{p} according to the pattern fusion strategy. Finally, pattern $\textbf{p}$ is fused with (1, 2, 3) and (2, 3, 1) respectively, and generates three candidate patterns, which are (1, 2, 3, 4), (2, 3, 4, 1), and (1, 3, 4, 2). Other patterns are processed in the same way. The results are shown in Table \ref{tab4:fusion}.
\end{example}

\begin{table}[H]
	\centering
	\caption{Generate candidate patterns adopting  pattern fusion strategy}
	\label{tab4:fusion}
	\begin{tabular}{ccccccc}
		\toprule   
		Frequent OPP & Candidate OPPs\\
		\midrule 
		(1, 2, 3) & (1, 2, 3, 4), (2, 3, 4, 1), (1, 3, 4, 2)\\
		(2, 3, 1) & (3, 4, 1, 2), (2, 4, 1, 3)\\
		(3, 1, 2) & (4, 1, 2, 3), (3, 1, 2, 4), (4, 2, 3, 1)\\		
		\bottomrule
	\end{tabular}
\end{table}
	
	By comparing the results in Tables \ref{tab3:enumeration} and \ref{tab4:fusion}, it is clear that the pattern fusion strategy can effectively reduce the generation of redundant patterns. The enumeration strategy generates 3 $ \times $ 4 $ = $ 12 candidate patterns since each frequent OPP can generate four kinds of candidate patterns. However, the pattern fusion strategy only generates eight candidate patterns. Therefore, the pattern fusion strategy outperforms the enumeration strategy.
	
	Algorithm \ref {alg:Pf} sketches the PFusion algorithm that generates candidate patterns using the pattern fusion strategy.
	
	\begin{algorithm}[h] 
		\caption{PFusion} \label{alg:Pf}
		\hspace*{0.02in} \textbf{Input:} pattern \textbf{p}, pattern \textbf{q}\\	
		\hspace*{0.02in} {\bf Output:}
		candidate pattern \textbf{c}		
		\begin{algorithmic}[1]
			\State Calculate \textit{suffix}(\textbf{p}) and \textit{suffixorder}(\textbf{p});
			\State Calculate \textit{prefix}(\textbf{q}) and \textit{prefixorder}(\textbf{q});
			\State Fuse \textbf{p} and \textbf{q} into \textbf{c} according to the pattern fusion strategy;
			\State return \textbf{c};
		\end{algorithmic}
	\end{algorithm}

\subsection{OPP-Miner: Mining all frequent OPPs}
\label{subsection:OPP-Miner}
	In this subsection, we propose the OPP-Miner algorithm and analyze its theoretical properties (including time complexity).
	
	Algorithm \ref{alg:opp} sketches the OPP-Miner algorithm that discovers all frequent OPPs. Given time series \textbf{s} and minimum support threshold \textit{minsup}, we define global variables \textit{F} to store all length frequent OPPs. An OPP with length 1 is meaningless because its trend is only one point. Therefore, we start with the OPP with length 2. Firstly, we need to calculate the support of each pattern of \{(1, 2), (2, 1)\} and store the frequent patterns into \textit{F}. The remaining steps of the OPP-Miner algorithm are shown as follows.
\begin{enumerate}[Step 1:]
	\item Extract pattern \textbf{p} in \textit{F};
	\item Traverse pattern \textbf{q} with the same length as \textbf{p} in \textit{F}, and fuse \textbf{p} and \textbf{q} into candidate pattern \textbf{c} according to the pattern fusion strategy.
	\item Calculate the support of pattern \textbf{c}.
	\item If pattern \textbf{c} is frequent, then store it in \textit{F}.
	\item Repeat the above steps until all patterns in \textit{F} are processed.
\end{enumerate}

		\begin{algorithm}[H] 
		\caption{OPP-Miner: OPP Pattern Mining}\label{alg:opp}
		\hspace*{0.02in} \textbf{Input:} time series \textbf{s}, minimum support threshold \textit{minsup}\\	
		\hspace*{0.02in} {\bf Output:}
		frequent OPP set \textit{F}		
		\begin{algorithmic}[1]
			\State Scan \textbf{s} and calculate the support of each pattern of {(1, 2), (2, 1)}, and store the frequent patterns in \textit{F};
			\For{each \textbf{p} in \textit{F}}
			\For{each \textbf{q} in \textit{F} with the same length as \textbf{p}}
			\State \textbf{c} $ \leftarrow $ PFusion(\textbf{p}, \textbf{q});
			\State \textit{support} $ \leftarrow $ FVP(\textbf{s}, \textbf{c});
			\If{\textit{support} $ \geq $ \textit{minsup}}
			\State \textit{F} $ \leftarrow $ \textit{F} $ \cup $ \textbf{c}
			\EndIf
			\EndFor
			\EndFor
			\State return \textit{F};
		\end{algorithmic}
	\end{algorithm}

\begin{theorem}
	OPP satisfies the Apriori property.
\end{theorem}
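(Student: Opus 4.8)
The plan is to establish the usual downward-closure statement and then read off the Apriori property as its contrapositive. Concretely, I would prove that if $\mathbf{p}=(p_1,\dots,p_m)$ is frequent then both $\textit{prefixorder}(\mathbf{p})$ and $\textit{suffixorder}(\mathbf{p})$ are frequent; iterating prefix/suffix extraction then shows that the relative order of every contiguous sub-sequence of a frequent pattern is frequent, and the contrapositive (every order-preserving super-pattern of an infrequent pattern is infrequent) is exactly the Apriori property used for pruning in OPP-Miner. It suffices to show $\textit{sup}(\textit{prefixorder}(\mathbf{p}),\mathbf{s})\ge \textit{sup}(\mathbf{p},\mathbf{s})$ and $\textit{sup}(\textit{suffixorder}(\mathbf{p}),\mathbf{s})\ge \textit{sup}(\mathbf{p},\mathbf{s})$.

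First I would fix an occurrence $L=\langle l_1,l_2,\dots,l_m\rangle$ of $\mathbf{p}$ in $\mathbf{s}$, so that $\sigma(s_{l_1},\dots,s_{l_m})=(p_1,\dots,p_m)$, and show that dropping the last coordinate gives an occurrence of the order-preserving prefix. The key observation is purely combinatorial: deleting the entry of rank $p_m$ from a length-$m$ sequence shifts every rank larger than $p_m$ down by exactly one and leaves every smaller rank unchanged. Applying this simultaneously to $(s_{l_1},\dots,s_{l_m})$ and to $(p_1,\dots,p_m)$ (which have the same relative order) shows $\sigma(s_{l_1},\dots,s_{l_{m-1}}) = \sigma(p_1,\dots,p_{m-1}) = \textit{prefixorder}(\mathbf{p})$, i.e. $\langle l_1,\dots,l_{m-1}\rangle$ is an occurrence of $\textit{prefixorder}(\mathbf{p})$. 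The same argument applied to the first coordinate (deleting the entry of rank $p_1$) shows $\langle l_2,\dots,l_m\rangle$ is an occurrence of $\textit{suffixorder}(\mathbf{p})$.

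Next I would verify injectivity so that the inequality on supports actually follows. Since an occurrence of a length-$m$ pattern consists of consecutive positions, it is determined by its starting index $l_1$; hence distinct occurrences $L\ne L'$ of $\mathbf{p}$ have distinct starting indices and therefore induce distinct prefix occurrences $\langle l_1,\dots,l_{m-1}\rangle \ne \langle l_1',\dots,l_{m-1}'\rangle$, and likewise distinct suffix occurrences. Thus $L\mapsto \langle l_1,\dots,l_{m-1}\rangle$ is an injection from the occurrence set of $\mathbf{p}$ into that of $\textit{prefixorder}(\mathbf{p})$, giving $\textit{sup}(\textit{prefixorder}(\mathbf{p}),\mathbf{s})\ge \textit{sup}(\mathbf{p},\mathbf{s})$, and symmetrically for the suffix. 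Iterating proves the general downward-closure claim, and I would close by noting that this is precisely what licenses the candidate-generation step: every frequent length-$(m+1)$ pattern $\mathbf{x}$ has $\textit{prefixorder}(\mathbf{x})$ and $\textit{suffixorder}(\mathbf{x})$ both frequent (so both lie in $F$) with matching overlap, hence is produced by PFusion.

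I do not expect a genuine obstacle. The only points needing care are the bookkeeping in the renumbering argument — making it explicit that ``take the relative order'' commutes with ``delete the first position'' and with ``delete the last position'' — and being precise that occurrences are counted by starting index, so the induced map on occurrence sets is injective and not merely well defined.
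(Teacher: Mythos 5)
Your proposal is correct and follows essentially the same route as the paper: establish $\textit{sup}(\textit{prefixorder}(\mathbf{p}),\mathbf{s})\ge \textit{sup}(\mathbf{p},\mathbf{s})$ and $\textit{sup}(\textit{suffixorder}(\mathbf{p}),\mathbf{s})\ge \textit{sup}(\mathbf{p},\mathbf{s})$, then take the contrapositive. The paper simply asserts these inequalities ``according to Definition 5'' (and in fact writes the suffix inequality with the sign reversed, apparently a typo, since its own contrapositive argument needs $\ge$), whereas you supply the missing rank-shifting and injectivity details --- a welcome tightening, not a different approach.
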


\begin{proof}
	It can be easily obtained that \textit{sup}(\textit{prefixorder}(\textbf{p}), \textbf{s}) $\geq$ \textit{sup}(\textbf{p}, \textbf{s}) and \textit{sup}(\textit{suffixorder}(\textbf{p}), \textbf{s}) $\leq$ \textit{sup}(\textbf{p}, \textbf{s}) according to Definition 5. If \textit{prefixorder}(\textbf{p}) is not a frequent pattern, i.e. , \textit{sup}(\textit{prefixorder}(\textbf{p}), \textbf{s}) $ < $ \textit{minsup}, then \textit{sup}(\textbf{p}, \textbf{s}) is less than \textit{minsup}. Hence, \textbf{p} is not a frequent pattern either. Similarly, if \textit{suffixorder}(\textbf{p}) is not a frequent pattern, then \textbf{p} is not a frequent pattern either. Therefore, OPP satisfies the Apriori property.
\end{proof}

\begin{theorem}
	The space complexity of OPP-Miner is \textit{O}(\textit{m} $ \times $ (\textit{L} $ + $ \textit{n})), where \textit{m}, \textit{n} and \textit{L} represent the pattern length, the time series length and the number of candidate patterns, respectively.
\end{theorem}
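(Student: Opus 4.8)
The plan is a direct resource-accounting argument: list every data structure that OPP-Miner (Algorithm~\ref{alg:opp}) holds in memory simultaneously, bound the size of each, and sum. The three contributors are the persistent frequent-pattern buffer $F$, the scratch space of one invocation of FVP (Algorithm~\ref{alg:fvp}), and the scratch space of one invocation of PFusion (Algorithm~\ref{alg:Pf}).

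First I would bound $F$. A pattern of length $k$ is a permutation of $\{1,\dots,k\}$ and so occupies $O(k)\subseteq O(m)$ cells, where $m$ is the maximum pattern length reached. Every pattern ever inserted into $F$ is either one of the two length-$2$ seeds or a candidate that survived the support test, so $|F|\le L+2=O(L)$; hence $F$ costs $O(mL)$ space.

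Next I would bound one FVP call. Step~1 stores the ordered table, of size $O(m)$; Step~2 stores the binary encodings $\mathbf{p}'$ and $\mathbf{s}'$, of sizes $m-1$ and $n-1$, i.e.\ $O(m+n)$; Step~3's $\mathrm{SBNDM}_2$ needs shift/occurrence tables over the constant-size binary alphabet together with the pattern, i.e.\ $O(m)$; Step~4's verification only re-reads the ordered table and $\mathbf{s}$. So one FVP call uses $O(m+n)$ working space, and likewise one PFusion call uses only $O(m)$ for its prefix/suffix arrays.

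The single step that requires an argument rather than a count is that this scratch space must \emph{not} be multiplied by $L$: OPP-Miner evaluates candidates one at a time within the double loop, so the $O(m+n)$ buffers of FVP and the $O(m)$ buffers of PFusion are released and reused before the next candidate and never coexist. Therefore the peak memory is $O(mL)+O(m+n)+O(m)=O(mL+m+n)$, and since pattern lengths start at $2$ and $n\ge 2$ we have $m+n\le mn$, so this is dominated by $O(mL+mn)=O(m(L+n))$, the claimed bound. The main obstacle here is purely careful bookkeeping: keeping $|F|$ tied to $L$ rather than to the worst-case number of permutations, and making explicit that FVP/PFusion scratch is transient --- a looser accounting would give $O(L(m+n))$, which is still valid but of a different, generally larger form. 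It is also worth noting why the final expression is written as $m(L+n)$: it absorbs the additive $m+n$ term from FVP into the $mn$ summand, which is legitimate precisely because $m,n\ge 2$.
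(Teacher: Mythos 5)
Your proof is correct and uses the same two-part decomposition as the paper (pattern storage plus support-computation workspace), but the two accounts of the second part differ. The paper charges FVP with $O(m\times n)$ space by assuming up to $n-m$ occurrences are held, each occupying $m$ cells, so the final bound $O(m\times L)+O(m\times n)=O(m\times(L+n))$ falls out as a direct sum with no absorption needed. You instead charge FVP only $O(m+n)$ for the ordered table, the binary encodings $\mathbf{p}'$ and $\mathbf{s}'$, and the $\mathrm{SBNDM}_2$ tables --- which more faithfully reflects the pseudocode of Algorithm~\ref{alg:fvp}, where occurrences are counted rather than stored --- and then you need the extra observation that $m+n\le m\times n$ for $m,n\ge 2$ to recover the stated form. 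Your version yields the tighter intermediate bound $O(m\times L+m+n)$ and makes explicit two points the paper leaves implicit, namely that $|F|$ is tied to the number of candidates $L$ rather than to the number of permutations, and that the FVP/PFusion scratch is transient and therefore not multiplied by $L$; the paper's version is looser but lands on the advertised expression without any further manipulation. Both arguments are valid proofs of the theorem as stated.
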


\begin{proof}
	OPP-Miner algorithm space is composed of two parts, the space of frequent and candidate patterns and the space of FVP. It is easy to know that the space complexity of the first part is \textit{O}(\textit{m} $ \times $ \textit{L}). Meanwhile, there are no more than \textit{n} $ - $ \textit{m} occurrences and each occurrence uses \textit{m} spaces. Hence, the space complexity of computing support is \textit{O}(\textit{m} $ \times $ \textit{n}). Therefore, the space complexity of OPP-Miner is \textit{O}(\textit{m} $ \times $ \textit{L} $ + $ \textit{m} $ \times $ \textit{n}) $ = $ \textit{O}(\textit{m} $ \times $ (\textit{L} $ + $ \textit{n})).
\end{proof}

\begin{theorem}
	The time complexity of OPP-Miner is \textit{O}(\textit{m} $ \times $ \textit{n} $ \times $ \textit{L}).
\end{theorem}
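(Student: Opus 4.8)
The plan is to bound the total running time of OPP-Miner by multiplying the cost of processing a single candidate pattern by the number of candidate patterns generated, and then to show that the per-pattern cost is dominated by the support computation, i.e.\ by one call to FVP. First I would observe that the outer structure of Algorithm~\ref{alg:opp} is: for each frequent pattern \textbf{p} in \textit{F} and each same-length \textbf{q} in \textit{F}, call PFusion to build a candidate \textbf{c} and then call FVP to compute \textit{sup}(\textbf{c},\textbf{s}); the number of (\textbf{p},\textbf{q}) pairs that actually produce a candidate is \textit{L} by the definition of \textit{L} as the number of candidate patterns. PFusion (Algorithm~\ref{alg:Pf}) only computes a suffix/prefix of length \textit{m}$-$1, their relative orders, and performs the fusion, so it runs in \textit{O}(\textit{m}) time (or \textit{O}(\textit{m}\log\textit{m}) for the ranking, which is absorbed). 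Hence the per-candidate cost is \textit{O}(\textit{m}) plus one FVP call.

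Next I would bound the cost of a single FVP call (Algorithm~\ref{alg:fvp}) by \textit{O}(\textit{m}$\times$\textit{n}). Building the ordered table of \textbf{p} and the binary strings \textbf{p}$'$ and \textbf{s}$'$ costs \textit{O}(\textit{m}$\log$\textit{m}$+$\textit{n}). The SBNDM$_2$ scan over \textbf{s}$'$ reports candidate positions in \textit{O}(\textit{n}) time overall (linear in the text length, up to the standard string-matching factors, and in any case bounded by \textit{O}(\textit{m}$\times$\textit{n})). Each reported position triggers one application of the verification strategy, which checks the \textit{m}$-$1 order relations \textit{s}$_{l_1-1+index[i]}<$\textit{s}$_{l_1-1+index[i+1]}$ and thus costs \textit{O}(\textit{m}); since there are at most \textit{n}$-$\textit{m} reported positions (as already noted in the proof of the space-complexity theorem), the verification work totals \textit{O}(\textit{m}$\times$\textit{n}). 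Summing, one FVP call is \textit{O}(\textit{m}$\times$\textit{n}), which also dominates the \textit{O}(\textit{m}) cost of the accompanying PFusion call.

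Finally I would combine the two bounds: \textit{L} candidate patterns, each costing \textit{O}(\textit{m}$\times$\textit{n}), plus the initial scan for length-2 patterns which is \textit{O}(\textit{n}) and absorbed, gives total time \textit{O}(\textit{m}$\times$\textit{n}$\times$\textit{L}). The main obstacle I anticipate is justifying that the filtration phase (the SBNDM$_2$ scan) does not exceed \textit{O}(\textit{m}$\times$\textit{n}) and that the number of verification calls is genuinely \textit{O}(\textit{n}); both follow from treating SBNDM$_2$ as a linear-scan string matcher over the binary-encoded text and from the observation that distinct occurrences correspond to distinct starting positions in \textbf{s}, so I would state these explicitly rather than leave them implicit. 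A secondary subtlety is that \textit{m} here denotes the maximum length of any pattern examined, so the per-call bounds should be read with \textit{m} as that maximum to make the product bound uniform over all iterations.
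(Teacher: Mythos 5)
Your proposal is correct and follows essentially the same route as the paper: bound one FVP call by \textit{O}(\textit{m} $\times$ \textit{n}) (at most \textit{n} $-$ \textit{m} filtration hits, each verified in \textit{O}(\textit{m})), then multiply by the \textit{L} candidate patterns, with the candidate-generation overhead absorbed into the dominant term. The only cosmetic difference is that the paper books candidate generation as an \textit{O}(\textit{L} $\times$ \textit{L}) aggregate over all same-length pattern pairs while you charge \textit{O}(\textit{m}) per fusion; both are swallowed by \textit{O}(\textit{m} $\times$ \textit{n} $\times$ \textit{L}).
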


\begin{proof}
	Creating the ordered table of \textbf{p} requires \textit{O}(\textit{m} $ \times $ \textit{m}). The filtration strategy finds no more than \textit{n} $ - $ \textit{m} sub-sequences. Each verification costs \textit{O}(\textit{m}). Therefore, the time complexity of FVP is \textit{O}(\textit{m} $ \times $ \textit{m}) $ + $ \textit{O}((\textit{n} $ - $ \textit{m}) $ \times $ \textit{m}) $ = $ \textit{O}(\textit{n} $ \times $ \textit{m}). The time complexity of generating candidate patterns is \textit{O}(\textit{L} $ \times $ \textit{L}). Since FVP runs \textit{L} times, the time complexity of OPP-Miner is \textit{O}(\textit{m} $ \times $ \textit{n} $ \times $ \textit{L} $ + $ \textit{L} $ \times $ \textit{L}) $ = $ \textit{O}(\textit{m} $ \times $ \textit{n} $ \times $ \textit{L}).
\end{proof}

\subsection{MOPP-Miner: Mining maximal OPPs}
\label{subsection:MOPP-Miner}
	For large databases, the number of frequent patterns could be very large. To further compress the frequent patterns, we also propose the MOPP-Miner algorithm which adopts the maximal checking strategy to rule out the redundant patterns.
	
	\begin{definition}
		Given two OPPs $ \textbf{p}_{{1}} $ $ = $ ($ \textit{a}_{1} $, $ \textit{a}_{2} $, \dots , $ \textit{a}_{{i}} $) and $ \textbf{p}_{2} $ $ = $ ($ \textit{b}_{1} $, $ \textit{b}_{2} $, \dots, $ \textit{b}_{{j}} $), $ \textbf{p}_{{1}} $ is an order-preserving sub-pattern of $ \textbf{p}_{{2}} $, if and only if there exist integers $ \textit{k}_{1} $, $ \textit{k}_{2} $, \dots , $ \textit{k}_{{i}} $ such that: (1) 1 $\leq$ $ \textit{k}_{1} $ $ < $ $ \textit{k}_{2} $ $ < $ $ \dots $ $ < $ $ \textit{k}_{{i}} $ $\leq$ \textit{j}; and (2) $ \sigma $($ \textit{a}_{1} $, $ \textit{a}_{2} $, \dots, $ \textit{a}_{{i}} $) $ = $ $ \sigma $($ \textit{b}_{{k}_{1}} $, $ \textit{b}_{{k}_{2}} $, \dots , $\textit{b}_{{k}_{{i}}} $). We also call $ \textbf{p}_{{2}} $  an order-preserving super-pattern of $ \textbf{p}_{{1}} $.
	\end{definition}

	\begin{definition} \label{mopp}
	If pattern \textbf{p} is a frequent OPP and all its order-preserving super-patterns are infrequent, then \textbf{p} is a maximal OPP; otherwise, \textbf{p} is an unmaximal OPP.
	\end{definition}

	\begin{example}
			For pattern (1, 2), one of its order-preserving super-patterns is (2, 3, 1) which is also frequent. Hence, (1, 2) is not a maximal pattern. In Example \ref{example1}, we can find two maximal OPPs which are (1, 2, 3, 4), and (3, 4, 1, 2).
	\end{example}
	
	\textbf{\textit{Maximal Checking Strategy}}. If pattern \textbf{p} is a frequent pattern, neither of the two patterns that are fused to generate \textbf{p} is the maximal pattern. We only need to record the index of these non-maximal patterns. After checking all frequent patterns, the unrecorded patterns  are maximal OPPs.
	
\begin{example}
	According to the pattern fusion strategy, pattern (4, 2, 3, 1) and (3, 4, 1, 2) can be fused into (5, 3, 4, 1, 2). If (5, 3, 4, 1, 2) is frequent, then we can determine that (4, 2, 3, 1) and (3, 4, 1, 2) are not the maximal pattern. 
\end{example}

The MOPP-Miner algorithm embeds the maximal checking strategy into OPP-Miner. Hence, the two algorithms are very similar. Compared with OPP-Miner, MOPP-Miner adds two steps. 

1. In Algorithm \ref {alg:opp},  suppose pattern \textbf{c} is a frequent OPP, according to Defintion \ref{mopp}, order-preserving sub-patterns \textbf{p} and \textbf{q} are not the maximal OPPs and store in \textit{G};

2. After finding all frequent OPPs, if a frequent pattern is not stored in \textit{G}, then it is a maximal OPP.


\section{Experimental analysis}
\label{section:Experimental}
	In this section, we conduct experiments and comparisons to verify the following claims: (1) OPP-Miner has good run time efficiency and scalability against the increase in data size; (2) MOPP-Miner can compress the patterns effectively; (3) OPP-Miner can find similar sub-sequences in time series; (4) Our algorithms have high utility in analyzing COVID-19 epidemic by identifying critical trends and improve the clustering performance.  The algorithms and data can be downloaded from \url{https://github.com/wuc567/Pattern-Mining/tree/master/OPP-Miner}.

\subsection{Benchmark Datasets and Baseline Methods}
	All experiments were carried out on an Intel(R) Core(TM) i5-3120M, 2.50GHZ CPU with 8GB RAM and 64-bit Windows 7, using VC++6.0 as the program development environment. To verify the performance of our mining algorithms, this paper adopts different types of real-life time series datasets as experimental data. The specific datasets are summarized in Table~\ref{tab5:dataset}.
	
	\begin{table}[H]
		\centering
		\caption{Benchmark datasets}
		\label{tab5:dataset}
		\begin{tabular}{ccccccc}
			\toprule   
			Dataset & From & Type & Length\\
			\midrule 
			SDB1\textsuperscript{1}  & Russell 2000	& Stock	& 8141\\
			SDB2 &	Nasdaq &	Stock & 12279\\
			SDB3 &	S$ \& $P 500 &	Stock & 23046\\
			SDB4 &	Dow 30 &	Stock &	31387\\
			SDB5\textsuperscript{2} &	Changping &	Temperature &	35064\\
			SDB6 &	Huairou &	Temperature &	35064\\
			SDB7 &	Shunyi &	Temperature &	35064\\
			SDB8 &	Tiantan &	Temperature &	35064\\
			SDB9\textsuperscript{3} &	FreezerRegularTrain	& Sensor &	903000\\
			SDB10\textsuperscript{4} &	Data-Stock &	Stock &	272\\
			SDB11\textsuperscript{5} &	CSSE COVID-19 Dataset &	Daily new cases &	181\\
			SDB12\textsuperscript{6} &	Car &	Sensor &	577\\
			SDB13 &	Beef &	Spectro &	470\\
			\bottomrule
		\end{tabular}
	\flushleft
	\begin{enumerate}[Note1:] 
		\item SDB1-4 databases are the daily values of the stock index and can be downloaded from \url{https://www.yahoo.com/}.
		\item  SDB5-8 databases are from Beijing multi-site air-quality data which is used in Reference~\cite{Zhang2017Cautionary} and can be downloaded from \url{https://archive.ics.uci.edu/ml/datasets.php}.
		\item FreezerRegularTrain is used in Reference~\cite{Murray2015Data} and can be downloaded from \url{http://www.timeseriesclassification.com/index.php}.
		\item Data-Stock is used in Reference~\cite{Tan2016Discovering} and can be downloaded from \url{http://www.fansmale.com/index.html}.
		\item CSSE COVID-19 Dataset is from \url{https://coronavirus.jhu.edu/}.
		\item SDB12 and SDB13 can be downloaded from \url{http://www.timeseriesclassification.com/index.php}, and SDB13 is used in Reference~\cite{Al-Jowder2002Detection}.
	\end{enumerate}
	\end{table}

	In this paper, OPP-Bndm, OPP-Nofilting, OPP-Df and OPP-Bf are employed as the competitive algorithms whose principles are shown as follows.

\begin{enumerate} 
	\item OPP-Bndm and OPP-Nofilting: To analyse the effect of FVP, OPP-Bndm and OPP-Nofilting are proposed. OPP-Bndm employs the classic $ \rm{BNDM} $ algorithm in the filtration strategy to find the same sub-sequences as $ \textbf{p}’ $ in $ \textbf{s}’ $. OPP-Nofilting does not employ the filtration strategy but directly executes the verification strategy.
	
	\item OPP-Df and OPP-Bf: To analyse the effect of pattern fusion strategy, OPP-Df and OPP-Bf are proposed. They generate candidate patterns according to the enumerate strategy and employ depth-first and breadth-first searching methods, respectively.
\end{enumerate} 

\subsection{Mining performance}
	In this subsection, we will evaluate the performance of OPP-Miner from two aspects: time efficiency and scalability.

\subsubsection{Time efficiency}
	To verify the time efficiency of OPP-Miner, we use databases SDB1-SDB8  to carry out the experiments with \textit{minsup} $ = $ 14. Five algorithms are compared in three aspects, the number of OPPs, the number of candidate patterns and running time. The results are shown in Figs.~\ref{fig2:opp}$ -$\ref{fig4:time}.
	
	\begin{figure}[h]
		\centering
		\includegraphics[width=0.35\textwidth]{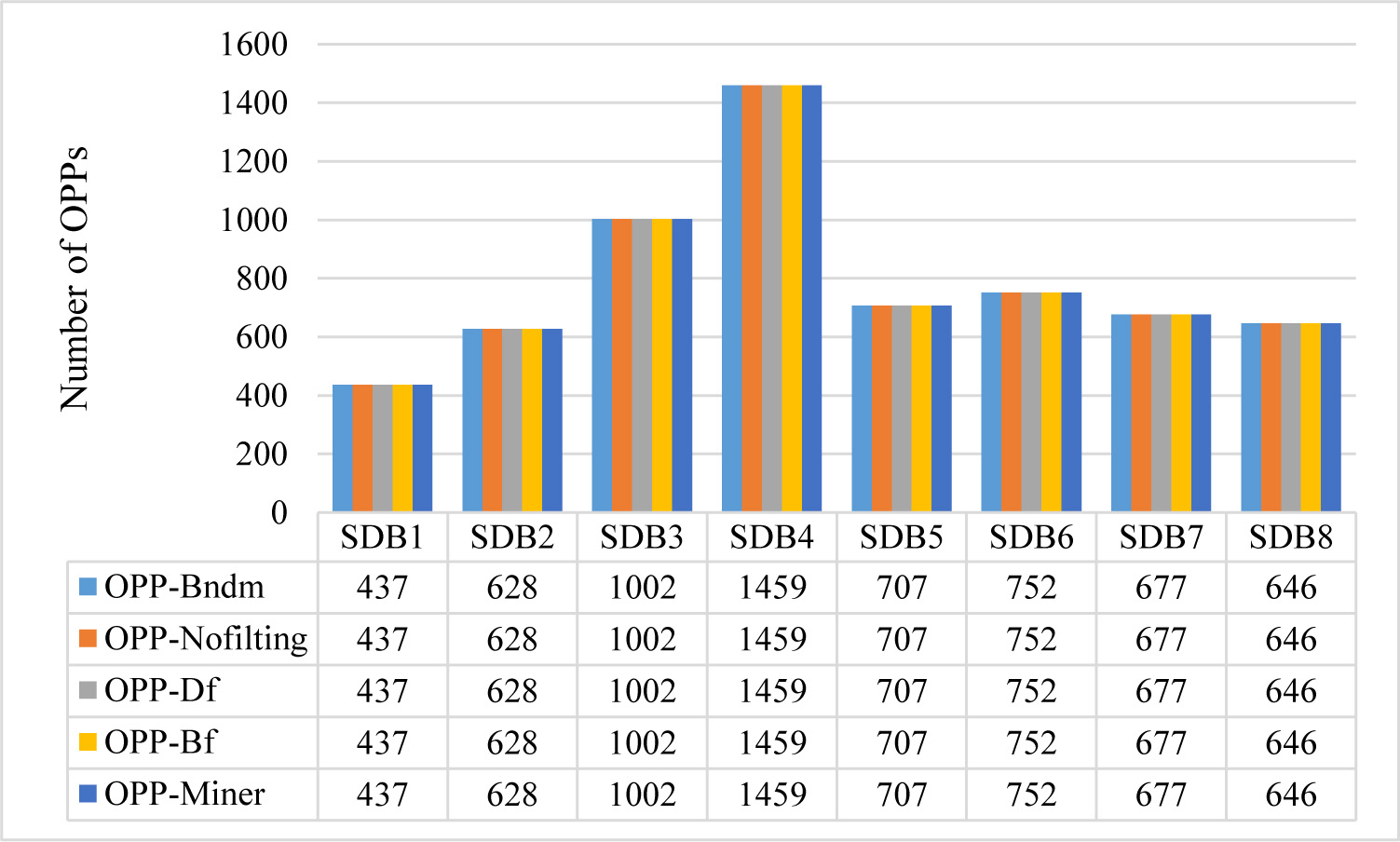}
		\caption{Comparison of  number of OPPs on SDB1 - SDB8}
		\label{fig2:opp}
	\end{figure}

	\begin{figure}[h]
	\centering
	\includegraphics[width=0.35\textwidth]{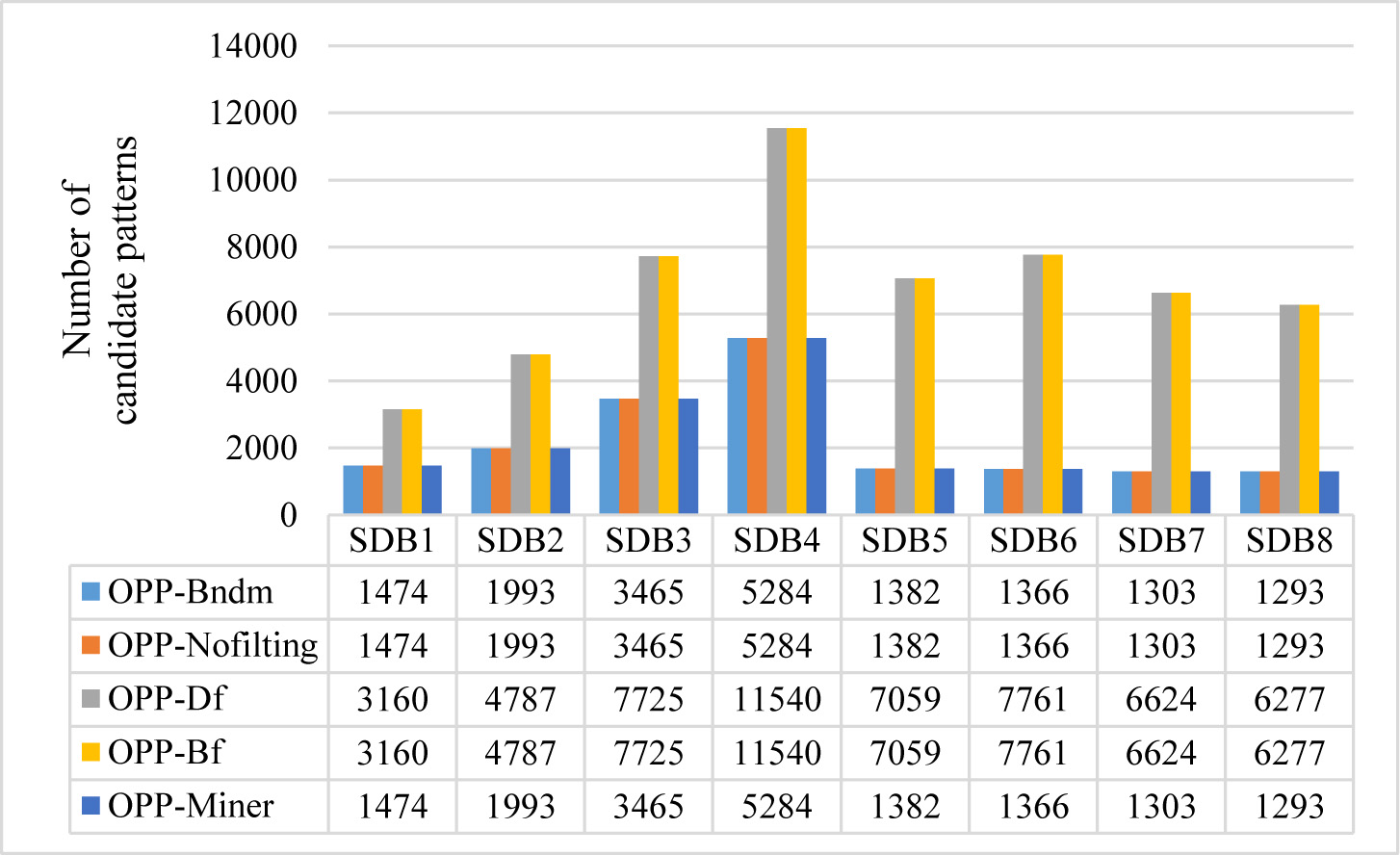}
	\caption{Comparison of  number of candidates on SDB1 - SDB8}
	\label{fig3:candi}
	\end{figure}

	\begin{figure}[h]
	\centering
	\includegraphics[width=0.35\textwidth]{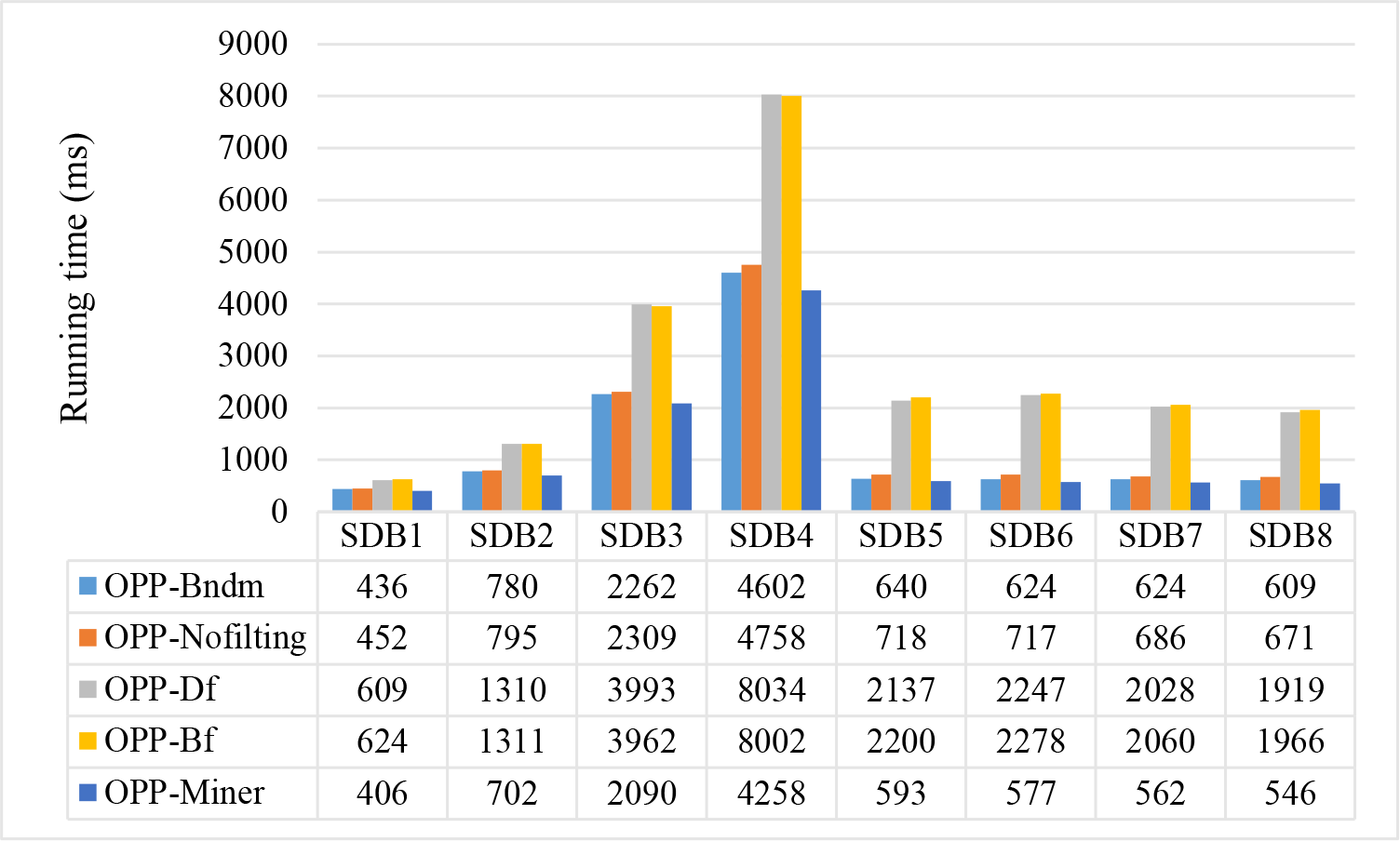}
	\caption{Comparison of running time on SDB1 - SDB8}
	\label{fig4:time}
	\end{figure}

	The results indicate the following observations:
	
\begin{enumerate}
\item The filtration and verification strategies are significantly effective. 

	Firstly, OPP-Miner is faster than OPP-Bndm, which verifies that the $ \rm{SBNDM}_{2} $ algorithm is more efficient than the $ \rm{BNDM} $ algorithm in finding the same sub-sequences as $ \textbf{p}’ $ in $ \textbf{s}’ $. Therefore, we apply the $ \rm{SBNDM}_{2} $ algorithm to OPP-Miner in the filtration strategy. 
	Secondly, OPP-Miner and OPP-Bndm are faster than OPP-Nofilting, which validates the necessity of the filtration strategy. According to Figs.~\ref{fig2:opp}$ -$\ref{fig4:time}, for the number of frequent OPPs and candidate patterns, the results of three algorithms are the same, but OPP-Miner and OPP-Bndm are faster than OPP-Nofilting. The reason is that OPP-Miner and OPP-Bndm add filtration strategy before verification strategy, while OPP-Nofilting does not. This illustrates that, although the filtration strategy will increase time consumption, it can reduce the verifications of redundant sub-sequences, which greatly improves efficiency.
	Hence, OPP-Miner outperforms OPP-Bndm and OPP-Nofilting.

\item Pattern fusion strategy is significantly effective. 

	As shown in Figs.~\ref{fig2:opp} and \ref{fig4:time}, under the same minimum support threshold, the number of frequent OPPs mined by OPP-Miner, OPP-Df and OPP-Bf are the same but OPP-Miner is faster than the other two. The reason is that the number of candidate patterns generated by OPP-Miner is far less than the other two. For example, from Fig.~\ref{fig3:candi}, OPP-Miner generates 1474 candidate patterns in SDB1, while OPP-Df and OPP-Bf generate 3160 candidate patterns. As analysis in Example 6, OPP-Df and OPP-Bf employ the enumeration strategy, which will generate many infrequent candidate patterns, while OPP-Miner employs the pattern fusion strategy which effectively prunes candidate patterns. Hence, OPP-Miner outperforms OPP-Df and OPP-Bf.
\end{enumerate}

	In conclusion, OPP-Miner has better time efficiency than other competitive algorithms.

\subsubsection{Scalability}
	To verify the scalability of OPP-Miner, we carry out experiments under different database sizes. The length of SDB9 is 903000. We intercept the sequence with length of 90, 900, 9000, 90000, and 900000 for experiments, and set \textit{minsup} of 3, 15, 75, 375, and 1875, respectively. The number of OPPs and the running time are shown in Tables~\ref{tab6:different length OPPs} and~\ref{tab7:different length time}.
	
		\begin{table*}
		\centering
		\caption{Comparison of  number of OPPs under different sequence length}
		\label{tab6:different length OPPs}
		\begin{tabular}{cccccccc}
			\toprule   
			 &Length=90&Length=900&Length=9000&Length=90000&Length=900000\\
			\midrule 
			OPP-Bndm &	30	& 34 & 80 & 159 & 212\\
			OPP-Nofilting &	30 &	34 &	80 &	159 & 212\\
			OPP-Df &	30 &	34 &	80 &	159 &	212\\
			OPP-Bf &	30 &	34 &	80 &	159 &	212\\
			OPP-Miner &	30 &	34 &	80 &	159 &	212\\
			\bottomrule
		\end{tabular}
	\end{table*}

		\begin{table*}
	\centering
	\caption{Comparison of  running time under different sequence length (ms)}
	\label{tab7:different length time}
	\begin{tabular}{cccccccc}
		\toprule   
		&Length=90&Length=900&Length=9000&Length=90000&Length=900000\\
		\midrule 
		OPP-Bndm &	31	& 31 & 109 & 2231 & 25678\\
		OPP-Nofilting &	31 &	47 & 109 &	2169 & 24165\\
		OPP-Df &	8 &	17 &	94 &	2090 &	29063\\
		OPP-Bf &	8 &	17 &	94 &	2122 &	29188\\
		OPP-Miner &	\textbf{2} &	\textbf{15} &	\textbf{78} &	\textbf{1950} &	\textbf{22542}\\
		\bottomrule
	\end{tabular}
\end{table*}
	
	As data size increases, OPP-Miner algorithm still has better performance. According to Tables~\ref{tab6:different length OPPs} and~\ref{tab7:different length time}, when the sequence length increases from 90 to 900000, the number of OPPs mined by OPP-Miner is the same as other algorithms, but the time-consuming is the smallest. For example, when the sequence length is 900000, OPP-Miner mines 212 frequent OPPs and takes 22542ms, which is faster than other algorithms. Hence, it can be concluded that the performance of OPP-Miner will not decrease with the increase in data size, i.e., the scalability of the OPP-Miner algorithm is strong.

\subsection{Compression ability}
	In this paper, we propose two mining algorithm, OPP-Miner (mining all frequent OPPs) and MOPP-Miner (mining maximal OPPs), to demonstrate the compression ability of MOPP-Miner, we conduct experiments on SDB1 - SDB8 with \textit{minsup} $ = $ 14, and compare the number of OPPs mined by two algorithms. The results are reported in Table~\ref{tab8:Compression}.
	
	\begin{table*}
		\centering
		\caption{Comparison of  number of OPPs mined by OPP-Miner and MOPP-Miner}
		\label{tab8:Compression}
		\begin{tabular}{cccccccccc}
			\toprule   
			&SDB1&	SDB2&	SDB3&	SDB4&	SDB5&	SDB6&	SDB7&	SDB8\\
			\midrule 
			OPP-Miner&	437&	628&	1002&	1459&	707&	752&	677&	646\\
			MOPP-Miner&	197&	163	& 472&	725&	122&	118&	125&	120\\
			Compress rate	&54.9$ \% $	&74$ \% $	&52.9$ \% $	&50.3$ \% $	&82.7$ \% $	&84.3$ \% $	&81.5$ \% $	&81.4$ \% $\\
			
			\bottomrule
		\end{tabular}
	\end{table*}
	
	Table~\ref{tab8:Compression} shows that MOPP-Miner can compress the patterns effectively. For example, OPP-Miner and MOPP-Miner find 752 frequent patterns and 118 maximal patterns in SDB6, respectively. Thus, the compression rate is (752$ - $118) / 752 $ = $ 84.3$ \% $. The reason is that OPP-Miner mines the complete set of the frequent patterns which contains redundant patterns, while MOPP-Miner mines a subset of the maximal patterns according to the maximal checking strategy. Therefore, the MOPP-Miner algorithm achieves the compression of the results, which will simplify the data understanding process greatly.

\subsection{Mining ability}
	Finding the critical trends is a typical task of time series pattern discovery. The OPP is used exactly to represent the trend of a time series based on order relation, which means that an OPP represents a kind of trend. Hence, if an OPP is reproduced in the original time series, the order relations of all occurrences should be exactly the same and the actual trends of all occurrences should be similar. Therefore, to verify this assumption, we conduct experiments on SDB10 with \textit{minsup} $ = $ 3 and select four OPPs, which are (3, 1, 2, 5, 4), (1, 2, 5, 3, 4), (5, 3, 4, 1, 2) and (4, 5, 2, 3, 1). We reproduce their occurrences in the original time series. The results are shown in Fig. \ref{FIG:5 similar sub-sequences}.
	
	\begin{figure*}[ht]
		\centering
		\includegraphics[width=0.7\textwidth]{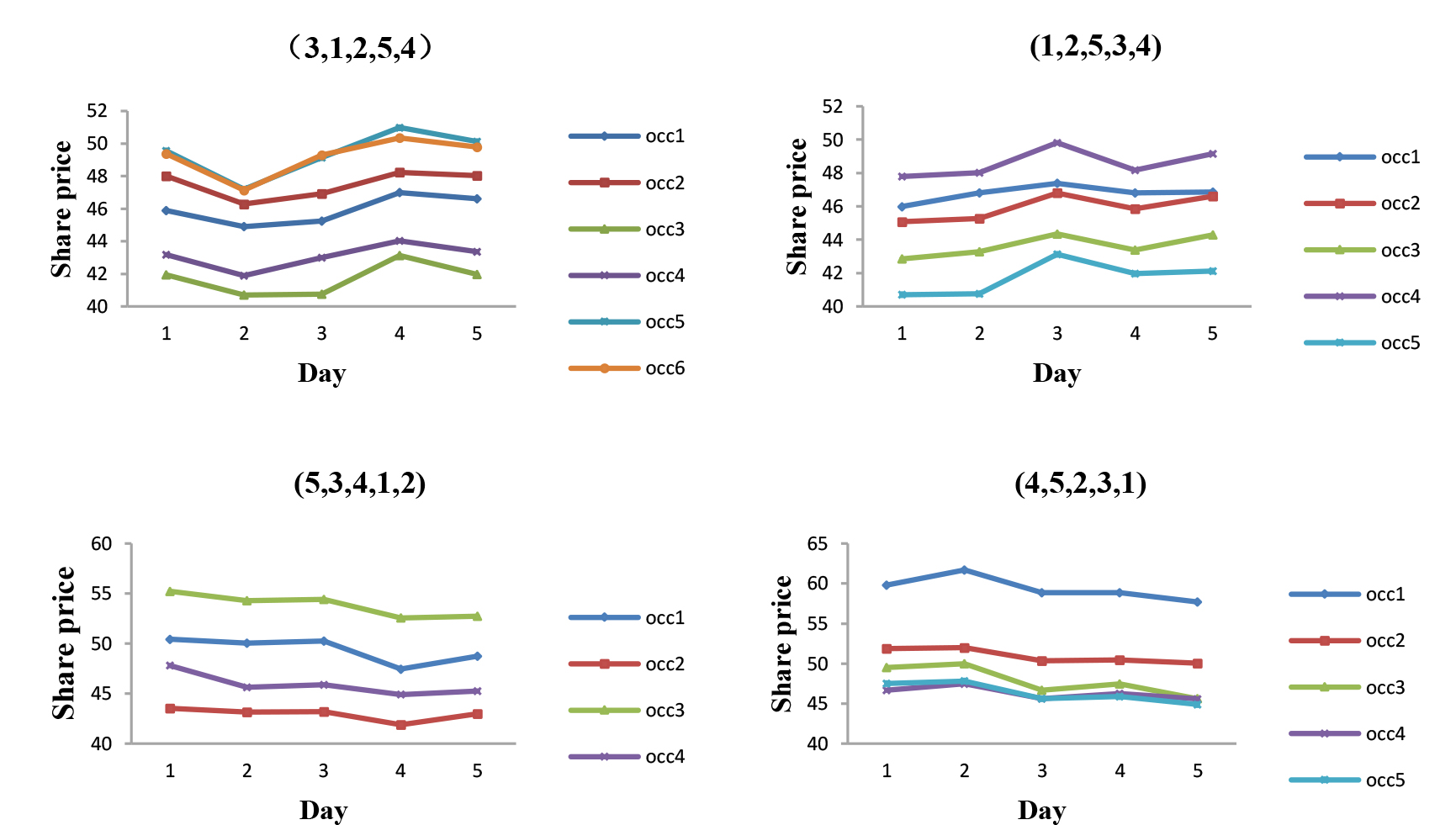}
		\caption{Example of  trends of four OPPs on SDB10.}
		\label{FIG:5 similar sub-sequences}
	\end{figure*}
	
	From Fig. \ref{FIG:5 similar sub-sequences}, it is clear that the order relations of all occurrences of each OPP are exactly the same, which proves the effectiveness of the OPP algorithm. Meanwhile, precisely because of the same order relations, all trends of the occurrences of each OPP in the original time series are very similar. For example, for OPP (3, 1, 2, 5, 4), the trends of $ occ_{5} $ and $ occ_{6} $ are almost coincident, and although other occurrences have different starting points, their overall shapes are very similar. Hence, OPP-Miner can find similar sub-sequences. 

\subsection{OPP Mining Case Study}
\subsubsection{COVID-19 Critical Trend Identification}
	Since early 2020, COVID-19 virus has been spreading around the world. As the epidemic intensifies worldwide, various data have been aggregated to evaluate the epidemic progression, such as the total confirmed cases, the daily confirmed new cases and the total deaths. Finding effective tools/algorithms to analyze such data is crucial to understand the spread pattern of the disease. 

	OPP mining can serve as an analytical method to identify critical trends of epidemics by mining frequent OPPs. In this subsection, we select the daily COVID-19 new cases data in China, USA, Brazil, and Iran from January 22, 2020 to July 20, 2020. To prevent data skew, we use a 5-day moving average, \textit{i.e.}, averaging the values within a five day window (including current day and two days before and after). OPP-Miner is employed to mine frequent OPPs with \textit{minsup} $ = $ 20 and the results are reported in Fig. \ref{FIG:6 COVID-19}. For ease of understanding, the OPPs representing upward trends are colored in yellow, and OPPs representing downward trends are colored in blue in Fig. \ref{FIG:6 COVID-19},
	
		\begin{figure*}[ht]
		\centering
		\includegraphics[width=0.45\textwidth]{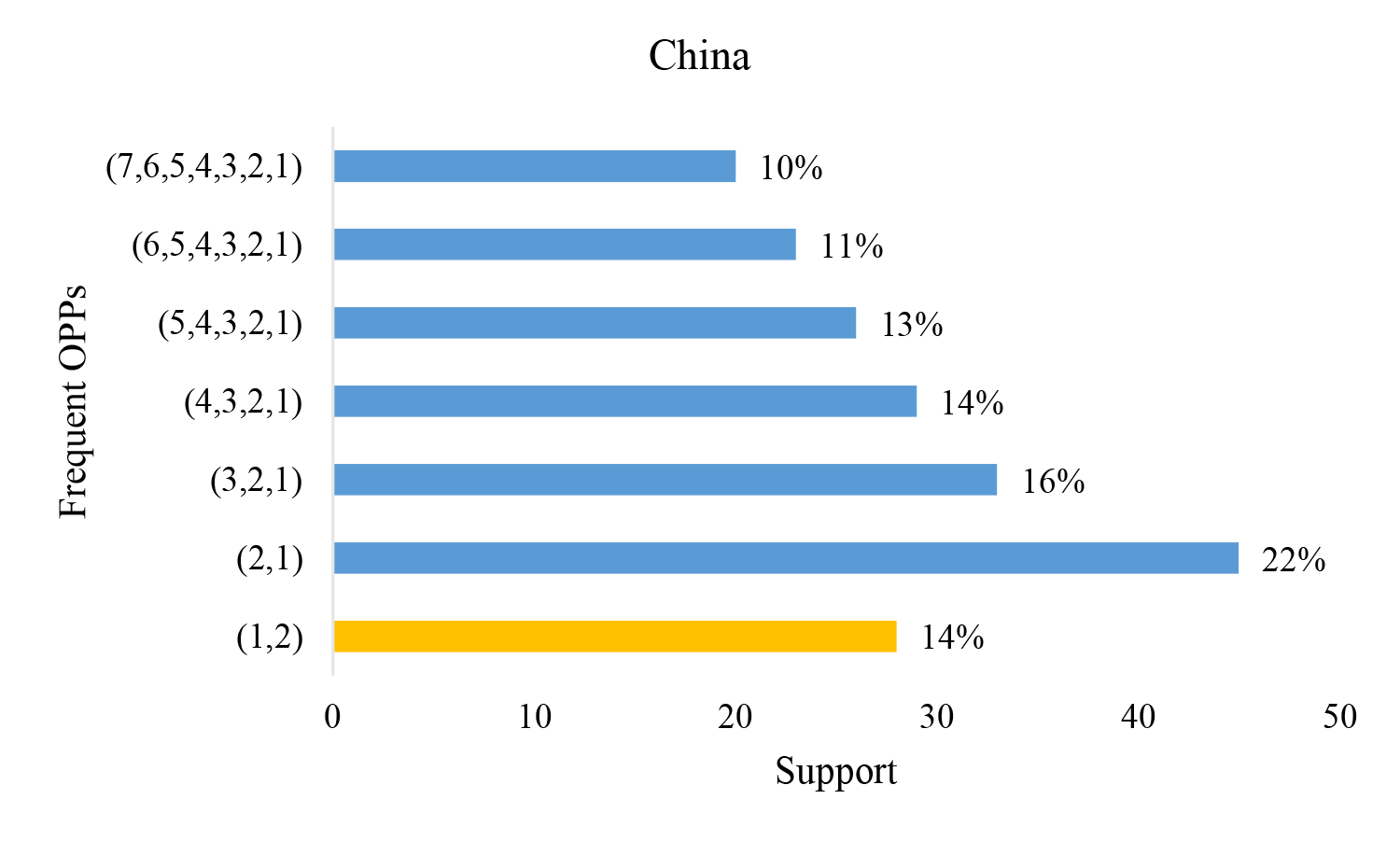}
		\includegraphics[width=0.45\textwidth]{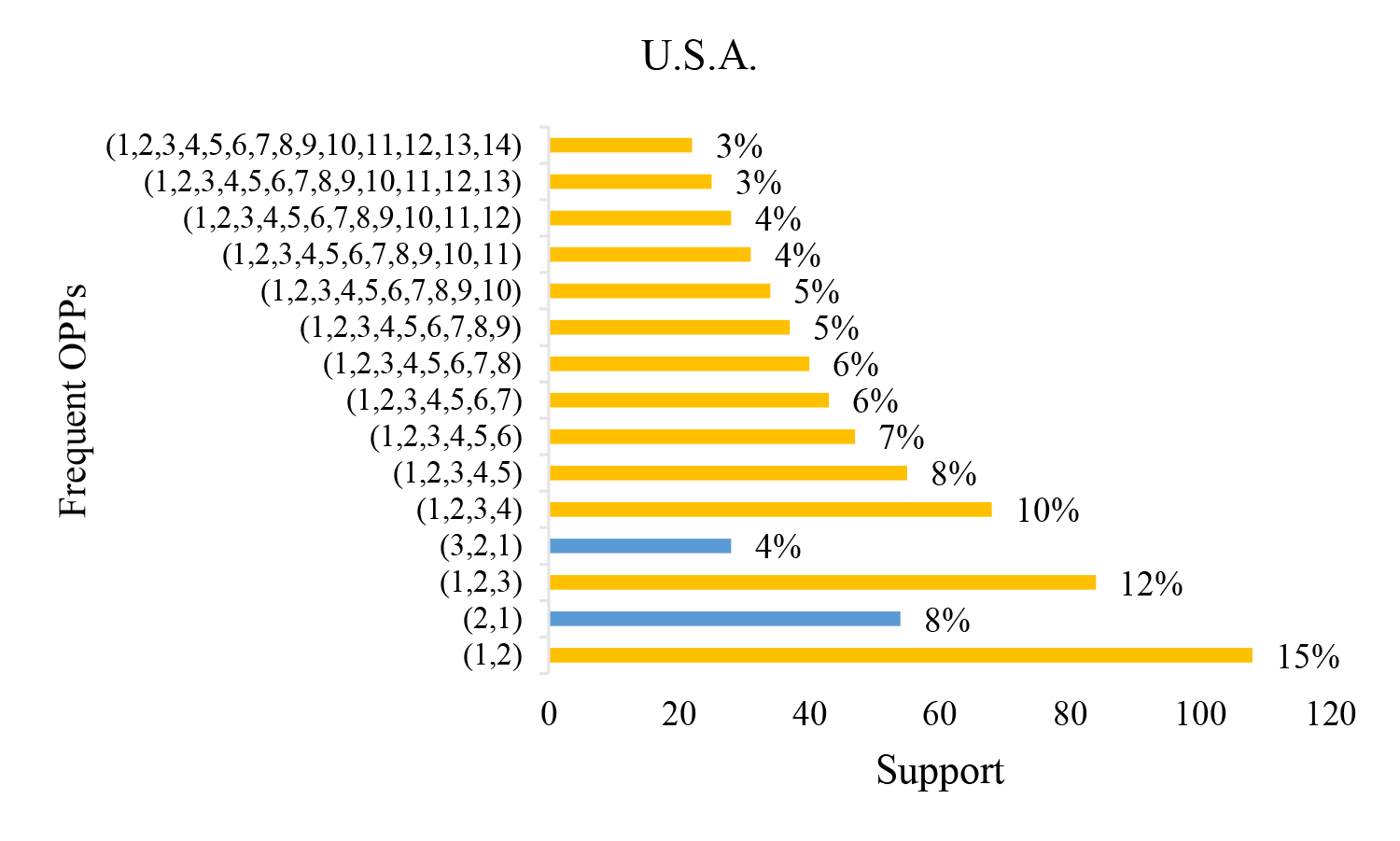}
		\includegraphics[width=0.45\textwidth]{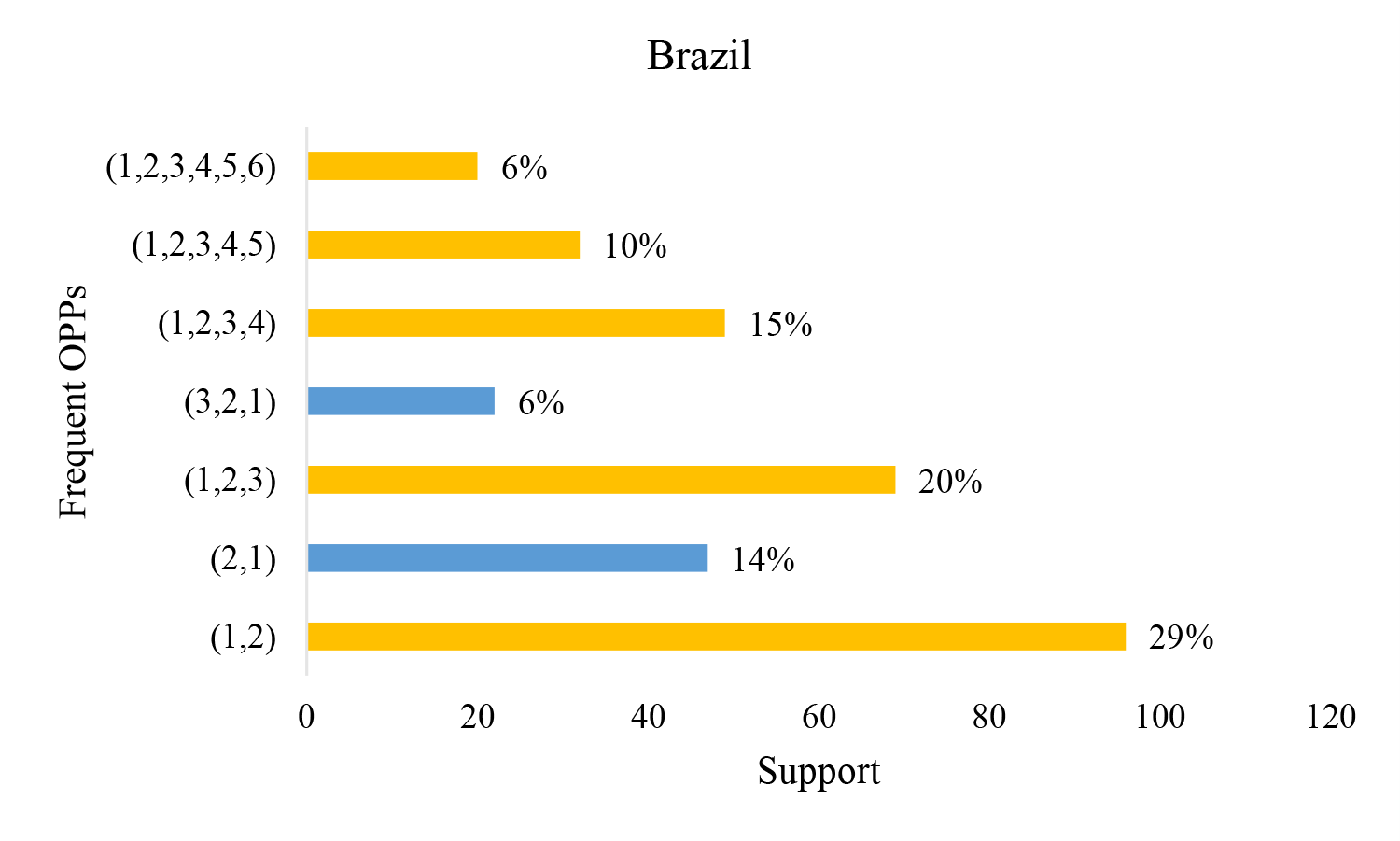}
		\includegraphics[width=0.45\textwidth]{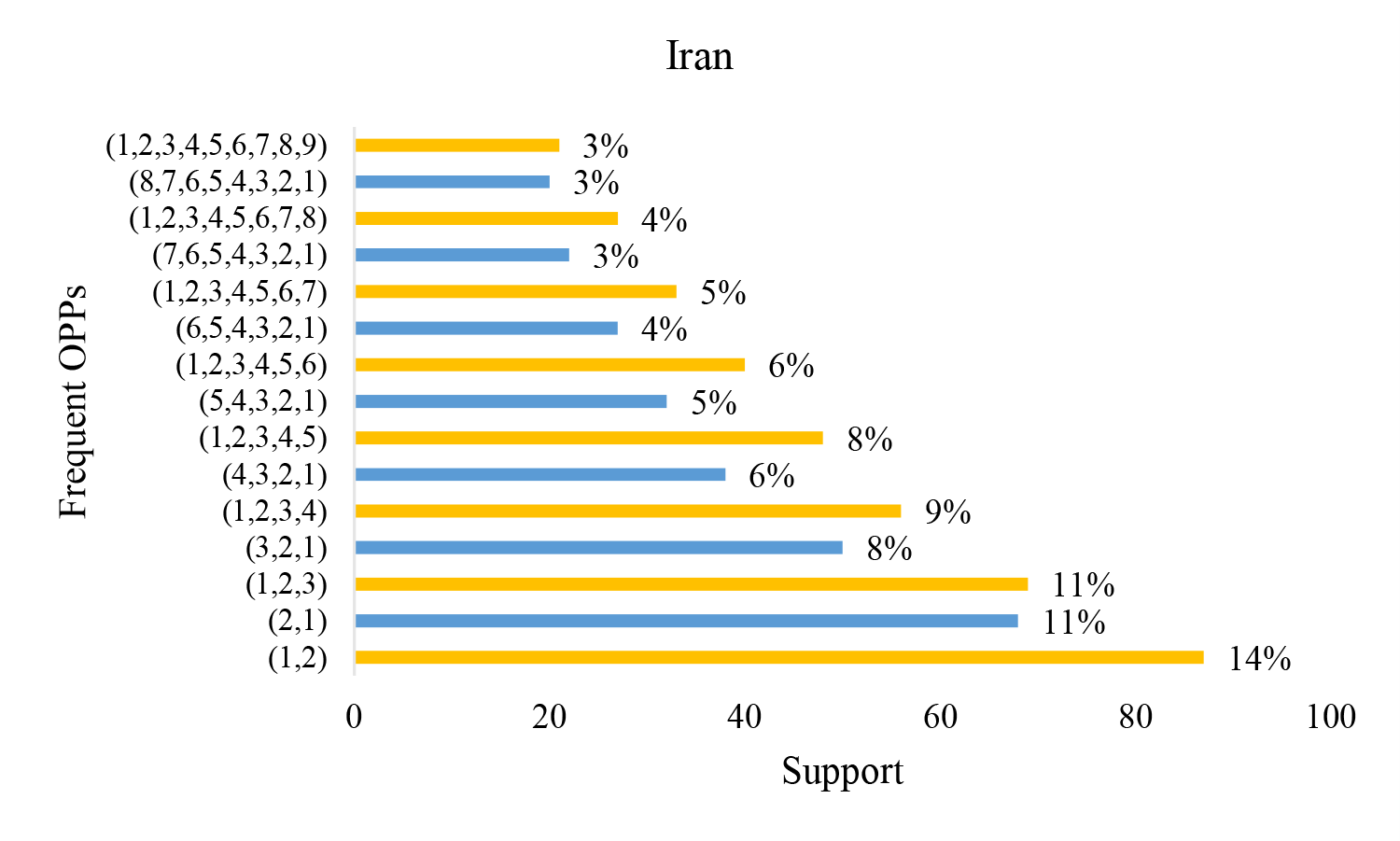}
		\caption{Frequent OPPs discovered from COVID-19 infection time series of four countries (OPPs representing upward trends are colored in yellow and OPPs representing downward trends are colored in blue)}
		\label{FIG:6 COVID-19}
	\end{figure*}

	The results from the COVID-19 data show the following observations. For China, the epidemic was gradually under control and recovering. According to Fig. \ref{FIG:6 COVID-19}, 86 percent of OPPs presented a downward trend and only 14 percent of OPPs were in an upward trend, which was a sign that the epidemic was brought under control. For USA and Brazil, the epidemic was in a severe outbreak. As reported in Fig. \ref{FIG:6 COVID-19}, the majority of OPPs were in an upward trend, indicating the epidemic in these countries was very serious. For Iran, the proportion of OPPs in a downward trend and in an upward trend was close, indicating the daily new cases were decreasing and the confirmed cases growth was slowing down. Therefore, the OPP mining method can help people analyse the epidemic situation by identifying critical trends.

\subsubsection{Clustering}
	To further demonstrate the utility of our algorithm, we also carry out the clustering experiment. Three datasets are selected as the raw data, which are SDB11, SDB12 and SDB13. For each dataset, we process it as follows.
	
	\begin{enumerate} 
	\item MOPP-Miner is employed to mine the maximal OPPs.
	
	\item Record the maximal OPPs and their supports as the mining data.
	
	\item K-means~\cite{Capo2020efficient} method is adopted to cluster the raw data and the mining data, respectively.
	
	\item Two metrics are used to evaluate the clustering performance, which are Normalized Mutual Information (\textit{NMI})~\cite{Danon2005Comparing} and Homogeneity (\textit{h})~\cite{Rosenberg2007V-Measure}. They can be calculated by Equations (3) and (4), respectively.
	\end{enumerate} 

	\begin{equation}
	NMI (X, Y) =\frac
	{\sum_{i = 1} ^{\lvert X \rvert}\sum_{j = 1} ^{\lvert Y \rvert}P(i,j)log(\frac{P(i,j)}{P(i)P(j)})}
	{\sqrt{\sum_{i = 1} ^{\lvert X \rvert}P(i)logP(i)\sum_{j = 1} ^{\lvert Y \rvert}P(j)logP(j)}}
	\end{equation}
	
	\begin{equation}
	h (X, Y) =1 - \frac
	{-\sum_{i = 1} ^{\lvert X \rvert}\sum_{j = 1} ^{\lvert Y \rvert}P(i,j)logP(i\lvert j)}
	{-\sum_{i = 1} ^{\lvert X \rvert}P(i)logP(i)}
	\end{equation}
		
	The experiments are conducted with \textit{minsup} $ = $ 6, 5, 5, and K $ = $ 4, 4, 5, respectively. The results are shown in Table \ref{tab9:clustering}.
	
	\begin{table}[h]
		\centering
		\caption{Comparison of clustering performance}
		\label{tab9:clustering}
		\begin{tabular}{ccccccc}
			\toprule   
				&Data	&Dimensionality	&\textit{NMI}	&\textit{h}\\
			\midrule 
			SDB11& Raw data&	181&	0.52&	0.46\\
			&Mining data	&\textbf{15}	&\textbf{0.60}	&\textbf{0.61}\\
			SDB12&Raw data	&577	&0.46	&0.45\\
			&Mining data	&\textbf{8}	&\textbf{0.59}	&\textbf{0.58}\\
			SDB13&Raw data	&470	&0.48	&0.45\\
			&Mining data	&\textbf{10}	&\textbf{0.63}	&\textbf{0.62}\\
			
			\bottomrule
		\end{tabular}
	\end{table}
	
	As shown in Table \ref{tab9:clustering}, the dimensionality of the raw data of SDB11 is 181. But through the maximal OPP mining, the dimensionality is reduced to 15. This has many advantages. Firstly, the reduction of data dimensionality will simplify the calculation. More importantly, using the maximal OPPs can improve the clustering performance. For example, \textit{NMI} and \textit{h} all reflect the similarity between the clustering results and the actual values. The greater the \textit{NMI} and \textit{h} are, the more similarity the clustering results and the actual values. For SDB13, the \textit{NMI} and \textit{h} of the raw data clustering result are 0.48 and 0.45, respectively, while the mining data are 0.63 and 0.62, respectively. Both of the two evaluation metrics show that using the maximal OPPs can improve the clustering performance. The reason is that the raw data generally contains many redundant information which will affect the clustering performance, while the maximal OPP mining implements the feature extraction which is useful for clustering.

\section{Conclusion}
\label{section:Conclusion}
	In this paper, we study a novel order-preserving pattern (OPP) mining problem in time series, where an OPP represents the trend of a time series based on the order relations of the time series data. We argue that OPP pattern has unique advantages, compared to other times series patterns, mainly because an OPP can summarize trend in the data, and does not require explicit conversion of time series as nominal/symbolic for pattern mining. However, mining OPPs is difficult because of the large pattern space. To tackle the challenge, we propose an OPP-Miner algorithm, which adopts the filtration and verification strategies to calculate support and the pattern fusion strategy to generate candidate patterns. In the process of support calculation, although the verification strategy can directly find the occurrences, the filtration strategy can avoid invalid verifications on redundant sub-sequences. In the process of candidate patterns generation, the enumeration strategy can generate candidate patterns, but it generates too many candidate patterns. To improve the efficiency, we propose the pattern fusion strategy which implements pruning by fusing frequent patterns satisfying the fusion conditions. Moreover, we also develop an MOPP-Miner algorithm to find maximal OPPs, which compresses the result set effectively. We evaluate the performance of OPP-Miner using real-life time series databases, such as stock, temperature and the daily COVID-19 new cases data. Experimental results demonstrate that our algorithms outperform competitive algorithms. More importantly, the maximal OPP mining implements the feature extraction, which can improve the clustering performance.

\end{document}